\newtheorem{theorem}{Theorem}
\DeclareMathOperator{\argmax}{argmax}
\DeclareMathOperator{\argmin}{argmin}
\begin{document}
\title{\LARGE{On the Effective Rate of NOMA in Underlay Spectrum Sharing}}
\author{Vaibhav~Kumar,~\IEEEmembership{Member,~IEEE,} Zhiguo~Ding,~\IEEEmembership{Fellow,~IEEE,} and~Mark~F.~Flanagan,~\IEEEmembership{Senior Member,~IEEE}% <-this % stops a space
\thanks{Copyright (c) 2015 IEEE. Personal use of this material is permitted. However, permission to use this material for any other purposes must be obtained from the IEEE by sending a request to pubs-permissions@ieee.org. \\ V.~Kumar and M.~F.~Flanagan are with School of Electrical and Electronic Engineering, University College Dublin, Ireland (e-mail: vaibhav.kumar@ieee.org, mark.flanagan@ieee.org).\\
Z.~Ding is with School of Electrical and Electronic Engineering, The University of Manchester, UK (e-mail: zhiguo.ding@manchester.ac.uk).}% <-this % stops a space
%\thanks{X.}% <-this % stops a space
}

% note the % following the last \IEEEmembership and also \thanks - 
% these prevent an unwanted space from occurring between the last author name
% and the end of the author line. i.e., if you had this:
% 
% \author{....lastname \thanks{...} \thanks{...} }
%                     ^------------^------------^----Do not want these spaces!
%
% a space would be appended to the last name and could cause every name on that
% line to be shifted left slightly. This is one of those "LaTeX things". For
% instance, "\textbf{A} \textbf{B}" will typeset as "A B" not "AB". To 
\markboth{IEEE Transactions on Vehicular Technology}%
{Kumar \MakeLowercase{\textit{et al.}}: On the effective rate of NOMA in underlay spectrum sharing}
\maketitle
\begin{abstract}
In this paper, we present the delay-constrained performance analysis of a multi-antenna-assisted multiuser non-orthogonal multiple access (NOMA) based spectrum sharing system over Rayleigh fading channels. We derive analytical expressions for the sum effective rate (ER) for the downlink NOMA system under a peak interference constraint. In particular, we show the effect of the availability of different levels of channel state information (instantaneous and statistical) on the system performance. We also show the effect of different parameters of interest, including the peak tolerable interference power, the delay exponent, the number of antennas and the number of users, on the sum ER of the system under consideration. An excellent agreement between simulation and theoretical results confirms the accuracy of the analysis. 
\end{abstract}

% Note that keywords are not normally used for peerreview papers.
%\begin{IEEEkeywords}
%Communications Society, IEEE, IEEEtran, journal, \LaTeX, paper, template.
%\end{IEEEkeywords}
\IEEEpeerreviewmaketitle

\section{Introduction}
\IEEEPARstart{W}{ith} an explosive growth in the number of wireless communication devices and users, along with an unprecedented growth in internet traffic, one of the major challenges for the development of beyond-fifth-generation (B5G) systems is to accommodate a massive number of devices into a congested spectrum. Moreover, with the development of new wireless services, there is a new wave of applications which are extremely delay sensitive. Such applications include live video streaming, online gaming, vehicle-to-everything (V2X) communications and tactile Internet. NOMA and spectrum sharing are two major technologies that can mitigate the problem of spectrum scarcity while also providing massive connectivity with low latency. It is well-known that NOMA facilitates spectrum sharing and can also enhance the secrecy performance of unmanned aerial vehicle (UAV) aided simultaneous wireless information and power transfer (SWIPT)~\cite{UAV} as well as that of the primary/licensed network~\cite{Alignment}.

The average achievable rate is one of the commonly used metrics to evaluate the performance of a wireless communication system. However, it is important to note that this metric is based on Shannon's channel capacity formula, which does not take into account the queuing delay at the transmitter. In order to quantify the delay-constrained performance of a wireless communications system, the concept of effective rate (also known as effective capacity\footnote{The authors in~\cite{Negi} defined the effective capacity as the maximum constant data arrival rate that can be supported by a radio link while satisfying the delay constraint.} or link-layer capacity) was introduced in~\cite{Negi}. The ER analysis of different OMA-based spectrum sharing systems has been well-investigated in many notable works including~\cite{EC_Nakagami} and~\cite{EC_band_selection_new}. In~\cite{EC_Nakagami}, the authors presented a closed-form analysis of an underlay spectrum sharing system over Nakagami-$m$ fading, considering an average interference constraint to ensure the quality-of-service (QoS) of the primary user. The ER analysis of a spectrum sharing system with an average interference constraint and different spectrum-band selection criteria was presented in~\cite{EC_band_selection_new}.

On the other hand, the delay-constrained performance analysis of NOMA systems has gained significant attention in the past few years. Among many notable contributions, some recent ones are given in~\cite{EC_NOMA, EC_JSTSP, EC_alpha_mu}. In~\cite{EC_NOMA}, the authors presented a closed-form analysis of the ER for a multiuser downlink NOMA system assuming that the \emph{instantaneous} channel state information (CSI) is available at the transmitter. A closed-form analysis of the delay violation probability and ER of a two-user downlink NOMA system over Nakagami-$m$, Rician and $\alpha$-$\mu$ fading channels was presented in~\cite{EC_JSTSP, EC_alpha_mu}, where the authors assumed that only the \emph{statistical} CSI of the downlink channels is available at the transmitter. 
%The ER analysis of a multiuser downlink NOMA system in finite-blocklength regime was presented in~\cite{EC_finite_blocklength}.

The superiority of NOMA in an underlay spectrum sharing scenario over its OMA-based counterpart, in terms of the average achievable rate, outage probability and ergodic sum secrecy rate, has been well-established in~\cite{Kumar_TCOM,Kumar_TVT,ICT_Invited}. However, to the best of the authors' knowledge, the delay-constrained analysis of a spectrum sharing NOMA system has not yet been presented in the open literature. Therefore, motivated by this, in this paper, we present the sum ER analysis of an underlay spectrum sharing multiuser downlink NOMA system, over Rayleigh fading channels, where the secondary users are assumed to be equipped with multiple receive antennas. We consider different levels of CSI availability at the secondary transmitter (ST), as given in~Table~\ref{TableCases}, and derive exact closed-form expressions for all of the cases to quantify the sum ER of the downlink NOMA system. Our results show the effect of different parameters of interests, including the peak tolerable interference power at the primary receiver (PR), the delay exponent, the number of secondary users, and the number of antennas at the secondary receivers, on the system performance. We also compare the performance of the NOMA-based system under consideration with a corresponding orthogonal time-frequency-division multiple access system. Moreover, we use a bisection-search-based optimal power allocation scheme that ensures that the ER of the strong users (to be defined later) in the NOMA-based system is equal to that of the strong users in the corresponding OMA-based system, and then the remaining power is distributed equally among the weak users (also to be defined later) in the NOMA system.
%=====================================
\section{System Model}
We consider an underlay spectrum sharing downlink NOMA system consisting of a ST, a PR, and $K$ secondary users, denoted by $\mathrm{U}_1, \mathrm U_2, \ldots, \mathrm U_K$ (here $K$ is assumed to be an even positive  integer). The secondary network consisting of the ST and users $\mathrm U_k, k \in \{1, 2, \ldots, K\}$, is assumed to be deployed outside the coverage area of the primary network. It is assumed that ST and PR are each equipped with a single antenna, while each user $\mathrm U_k$ is equipped with $N_k(\geq 1)$ antennas. Let $d_p$ and $d_k$ denote the distance between the ST and PR, and that between the ST and $\mathrm U_k$, respectively. Furthermore, assume that $h_{\mathrm p} \triangleq \sqrt{\Omega_{\mathrm p}}\hat h_{\mathrm p}$ and $h_{k,i} \triangleq \sqrt{\Omega_k} \hat h_{k, i}$ denote the channel coefficient between the ST and PR, and between the ST and $i$-th antenna of $\mathrm U_k$ (here $1 \leq i \leq N_k$), respectively. Here $\hat h_{\mathrm p} \sim \mathcal{CN}(0, 1)$, $\hat h_{k, i} \sim \mathcal{CN}(0, 1)$, and $\Omega_{\mathrm p}$ and $\Omega_k$ denote the path loss associated with the ST-PR link and that associated with the link between the ST and the $i$-th antenna at $\mathrm U_k$, respectively. The corresponding channel gains are denoted by $g_{\mathrm p} \triangleq |h_{\mathrm p}|^2$ and $g_{k, i} \triangleq |h_{k, i}|^2$. It is also assumed that the total available bandwidth for the secondary (NOMA) users is denoted by $B$, and the length of each fading block corresponding to links between the ST and the secondary users is denoted by $T$. In this work, we consider that there is no constraint on the power budget at the ST, i.e., the ST, in principle, has unlimited power; however, the power transmitted from the ST is constrained by a peak interference constraint at the PR. We also assume that instantaneous CSI is available at $\mathrm U_k$ regarding the ST-$\mathrm U_k$ links, whereas the details regarding the availability of CSI at the ST for different links in different cases is given in~Table~\ref{TableCases}. Note that in the table, ``IL-CSI'' (interference-link CSI) denotes the CSI for the ST-PR link, whereas ``SL-CSI'' (secondary-link CSI) denotes the CSI for the ST-$\mathrm U_k$ links.

Following the arguments in~\cite{UserPairing}, we adopt a low-overhead location-based user-pairing scheme. In this scheme, we form a set $\mathcal S$, consisting of the distances of the secondary users from the ST, arranged in a descending order. Then the first user-pair consists of the secondary users whose distances are the first and the last elements in $\mathcal S$. After constructing the first user-pair, the distance of the selected users is removed from $\mathcal S$. This process is repeated until $\mathcal S = \varnothing$.

Note that in the sections to follow, we present the performance analysis for an arbitrary user-pair consisting of two users $\mathrm{U_n}$ (the \emph{near} user) and $\mathrm{U_f}$ (the \emph{far} user). Note that in each of the user-pairs, the user that is nearer to ST is labeled as $\mathrm{U_n}$ and the other user is labeled as $\mathrm{U_f}$. We allocate equal bandwidth for all the user-pairs, i.e., the bandwidth available for each user-pair is given by $\mathcal B \triangleq 2B/K$. However, in~Section~\ref{sec:Results}, we show the results for the overall system, by adding the sum ER of each of the user-pairs.
Throughout the text, $f_{\mathcal X}(\cdot)$, $F_{\mathcal X}(\cdot)$ and $F^{-1}_{\mathcal X}(\cdot)$ denote the probability density function (PDF), cumulative distribution function (CDF) and inverse distribution function (IDF), respectively, for the random variable $\mathcal X$.

Next, we will provide a detailed ER analysis for the different system configurations given in~Table~\ref{TableCases}.
\begin{table}[t]
\centering 
\caption{Different NOMA system configurations analyzed in this work.}
\label{TableCases}
\begin{tabular}{|l|l|l|}
\hline
\textbf{Case} & \textbf{IL-CSI at the ST} & \textbf{SL-CSI at the ST} \\ \hline
I-I            & Instantaneous             & Instantaneous             \\ \hline
I-S            & Instantaneous             & Statistical               \\ \hline
S-I            & Statistical               & Instantaneous             \\ \hline
S-S            & Statistical               & Statistical               \\ \hline
\end{tabular}
\end{table}

\section{Case I--I: Instantaneous IL-CSI and SL-CSI} \label{sec:I-I}
In this section, we consider the scenario where the ST has instantaneous CSI regarding the ST-PR link as well as the ST-$\mathrm U_{\mathrm n}$ and ST-$\mathrm U_{\mathrm f}$ links. Due to the availability of instantaneous SL-CSI, the users $\mathrm{U_n}$ and $\mathrm{U_f}$ are further categorized as the \emph{strong} user $\mathrm{U_s}$ and \emph{weak} user $\mathrm{U_w}$, where $\mathrm s \triangleq \argmax_{j\in\{\mathrm n, \mathrm f\}} \sum_{i = 1}^{N_j}|h_{j, i}|^2$, and $\mathrm w \triangleq \argmin_{j\in\{\mathrm n, \mathrm f\}} \sum_{i = 1}^{N_j}|h_{j, i}|^2$. 

The ST transmits a power-division multiplexed symbol $\sqrt{a_{\mathrm s} P_{\mathrm t}(g_{\mathrm p})} x_{\mathrm s} + \sqrt{a_{\mathrm w} P_{\mathrm t}(g_{\mathrm p})} x_{\mathrm w}$ to both the users. Here $P_{\mathrm t}(g_{\mathrm p})$, which is in general a one-to-one mapping from $g_{\mathrm p}$ to the set of positive real numbers~$\mathbb R^+$, denotes the power transmitted from the ST, $x_{\mathrm s}$ and $x_{\mathrm w}$ denote the symbol intended for the strong and the weak user, respectively, and $a_{\mathrm s}$ and $a_{\mathrm w}$ are the corresponding power allocation coefficients with $a_{\mathrm s} < a_{\mathrm w}$ and $a_{\mathrm s} + a_{\mathrm w} = 2/K$. Denoting the peak tolerable interference power at the PR by $I$, the optimal instantaneous transmit power to maximize the sum ER of the secondary network is given by $P_{\mathrm t}^*(g_{\mathrm p}) = I/g_{\mathrm p}$. Upon reception of the superimposed signal from the ST, the secondary users apply maximal-ratio combining (MRC) to combine the received signal, where the equivalent channel gain between the ST and $\mathrm{U_s}$, and that between the ST and $\mathrm{U_w}$, are respectively given by $g_{\mathrm s} = \sum_{i = 1}^{N_{\mathrm s}} |h_{\mathrm s, i}|^2$ and $g_{\mathrm w} = \sum_{i = 1}^{N_{\mathrm w}} |h_{\mathrm w, i}|^2$. The strong user first decodes the message intended for the weak user (by considering the inter-user interference as additional Gaussian noise) and then applies successive interference cancellation (SIC) to decode its own signal. On the other hand, the weak user directly decodes its own message by considering the inter-user interference as additional noise. Denoting by $\gamma_{\mathrm s}$ the signal-to-noise ratio (SNR) for the decoding of the intended signal at the strong user, and by $\gamma_{\mathrm w}$ the signal-to-interference-plus-noise ratio (SINR) for the corresponding decoding at the weak user, we have $\gamma_s \triangleq \!\tfrac{a_{\mathrm s} I g_{\mathrm s}/g_{\mathrm p}}{\mathscr N_0 \mathcal B} = a_{\mathrm s} \hat I X_{\mathrm s}$ and $\gamma_{\mathrm w} \triangleq \!\tfrac{a_{\mathrm w} I g_{\mathrm w}/g_{\mathrm p}}{\mathscr N_0 \mathcal B + a_{\mathrm s} I g_{\mathrm w}/g_{\mathrm p}} \!=\! \tfrac{a_{\mathrm w} \hat I X_{\mathrm w}}{1 + a_{\mathrm s} \hat I X_{\mathrm w}}$, where $\mathscr N_0$ is the noise power spectral density (PSD), $\hat I \triangleq I/(\mathscr N_0 \mathcal B)$, $X_{\mathrm s} \triangleq g_{\mathrm s}/g_{\mathrm p}$, and $X_{\mathrm w} \triangleq g_{\mathrm w}/g_{\mathrm p}$. Therefore, the ER for $\mathrm U_u, u \in \{\mathrm s, \mathrm w\}$, is given by~\cite{EC_NOMA}
\begin{align}
	R_u = \dfrac{-1}{\nu} \log_2 \left[ \mathbb E_{\gamma_u} \left\{ (1 + \gamma_u)^{-\nu}\right\}\right], \label{Eqn:Ru_def}
\end{align}
where $\nu \triangleq \theta T \mathcal B/\ln 2$ and $\theta$ is the delay QoS exponent, which represents the asymptotic decay-rate of the buffer occupancy at the ST. 

Using~\eqref{Eqn:Ru_def}, the ER for the strong user $\mathrm{U_s}$ is given by 
\begin{align}
	R_{\mathrm s} = \dfrac{-1}{\nu} \log_2 \left[ \int_0^\infty (1 + a_{\mathrm s} \hat I x)^{-\nu} f_{X_{\mathrm s}}(x) \mathrm dx \right]. \label{Eqn:CaseII:Rs_def}
\end{align}
\begin{theorem}\label{CaseII:Theorem_Rs}
	When instantaneous IL-CSI and SL-CSI are available at the ST, the ER of the strong user is given by 
	\begin{align}
		R_{\mathrm s} = -\dfrac{1}{\nu} & \log_2 \left[ T_1 (N_{\mathrm n}, \Omega_{\mathrm n}) + T_1 (N_{\mathrm f}, \Omega_{\mathrm f}) \right. \notag \\
		& \left. - T_2(N_{\mathrm n}, N_{\mathrm f}, \Omega_{\mathrm n}, \Omega_{\mathrm f}) - T_2(N_{\mathrm f}, N_{\mathrm n}, \Omega_{\mathrm f}, \Omega_{\mathrm n})\right], \label{Eqn:CaseII:RsClosed}
	\end{align}
where $T_1(\alpha, \beta) = \left[ \Omega_{\mathrm p}/(\beta a_{\mathrm s} \hat I)\right]^\alpha  G_{2, 2}^{2, 2} \left( \tfrac{\Omega_{\mathrm p}}{\beta a_{\mathrm s} \hat I} \left\vert \begin{smallmatrix} -\alpha, & 1 - \alpha \\ 0, & \nu - \alpha \end{smallmatrix} \right. \right) / [ \Gamma(\alpha)$ $ \Gamma(\nu) ]$, $T_2(\alpha_1, \alpha_2, \beta_1, \beta_2) = \sum_{\tau = 0}^{\alpha_2 - 1} \tfrac{[\Omega_{\mathrm p}/(a_{\mathrm s} \hat I)]^{\alpha_1 + \tau}}{\Gamma(\alpha_1) \Gamma(\nu) \beta_1^{\alpha_1} \beta_2^{\tau} \tau!}$ $G_{2, 2}^{2, 2} \left( \tfrac{\Omega_{\mathrm p}}{\Omega a_{\mathrm s} \hat I} \left\vert \begin{smallmatrix} -\alpha_1 - \tau, & 1 - \alpha_1 - \tau \\ 0, & \nu - \alpha_1 - \tau\end{smallmatrix}\right. \right)$, $\Omega \triangleq \Omega_1 \Omega_2 / (\Omega_1 + \Omega_2)$, and $\Gamma[\cdot]$ and $G(\cdot)$ denote the Gamma function and Meijer's G-function, respectively. 
\end{theorem}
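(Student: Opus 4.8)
The plan is to evaluate the expectation $\mathbb E_{X_{\mathrm s}}[(1+a_{\mathrm s}\hat I X_{\mathrm s})^{-\nu}]$ in~\eqref{Eqn:CaseII:Rs_def} in closed form; \eqref{Eqn:CaseII:RsClosed} then follows by applying $-\tfrac1\nu\log_2(\cdot)$. First I would record the statistics: $g_{\mathrm p}$ is exponential with mean $\Omega_{\mathrm p}$, and each $g_j=\sum_{i=1}^{N_j}|h_{j,i}|^2$ is a sum of $N_j$ i.i.d.\ exponentials of mean $\Omega_j$, hence Gamma-distributed with integer shape $N_j$ and scale $\Omega_j$, so that $F_{g_j}(z)=1-e^{-z/\Omega_j}\sum_{\tau=0}^{N_j-1}(z/\Omega_j)^\tau/\tau!$. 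Since $g_{\mathrm s}=\max(g_{\mathrm n},g_{\mathrm f})$ and $g_{\mathrm p}$ is independent of $(g_{\mathrm n},g_{\mathrm f})$, I would split on which user is strong,
\[
\mathbb E_{X_{\mathrm s}}\big[(1+a_{\mathrm s}\hat I X_{\mathrm s})^{-\nu}\big]=\mathbb E\big[(1+a_{\mathrm s}\hat I\,g_{\mathrm n}/g_{\mathrm p})^{-\nu}\mathbf 1(g_{\mathrm n}>g_{\mathrm f})\big]+(\mathrm n\leftrightarrow\mathrm f),
\]
and integrate out the ``losing'' gain inside each indicator, which turns $\mathbf 1(g_{\mathrm f}<g_{\mathrm n})$ into $F_{g_{\mathrm f}}(g_{\mathrm n})$ (and symmetrically in the other term).

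Substituting the finite-sum form of $F_{g_{\mathrm f}}$, the ``$1$'' produces the unconditioned quantity $\mathbb E[(1+a_{\mathrm s}\hat I\,g_{\mathrm n}/g_{\mathrm p})^{-\nu}]$, while the remaining $N_{\mathrm f}$ terms carry an extra factor $e^{-g_{\mathrm n}/\Omega_{\mathrm f}}(g_{\mathrm n}/\Omega_{\mathrm f})^\tau/\tau!$; the point is that $e^{-g_{\mathrm n}/\Omega_{\mathrm n}}e^{-g_{\mathrm n}/\Omega_{\mathrm f}}=e^{-g_{\mathrm n}/\Omega}$ with $\Omega=\Omega_{\mathrm n}\Omega_{\mathrm f}/(\Omega_{\mathrm n}+\Omega_{\mathrm f})$, while the monomial $g_{\mathrm n}^\tau$ just promotes the Gamma shape from $N_{\mathrm n}$ to $N_{\mathrm n}+\tau$. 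Hence every summand is a constant multiple of a single master quantity
\[
\mathcal J(a,b)\triangleq\mathbb E_{g,g_{\mathrm p}}\big[(1+a_{\mathrm s}\hat I\,g/g_{\mathrm p})^{-\nu}\big],\qquad g\ \text{Gamma with shape }a,\ \text{scale }b,
\]
evaluated at $(a,b)=(N_{\mathrm n},\Omega_{\mathrm n})$ for the ``$1$'' term and at $(a,b)=(N_{\mathrm n}+\tau,\Omega)$ for the $\tau$-th term (and the mirror statements with $\mathrm n\leftrightarrow\mathrm f$).

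The heart of the argument is the closed-form evaluation of $\mathcal J(a,b)$. I would first form the density of the ratio $X=g/g_{\mathrm p}$, which is the Beta-prime-type law $f_X(x)=a\,b\,\Omega_{\mathrm p}^{a}\,x^{a-1}/(\Omega_{\mathrm p}x+b)^{a+1}$, so that $\mathcal J(a,b)=\tfrac{a\Omega_{\mathrm p}^{a}}{b^{a}}\int_0^\infty(1+a_{\mathrm s}\hat I x)^{-\nu}x^{a-1}(1+\Omega_{\mathrm p}x/b)^{-(a+1)}\mathrm dx$. This integral converges (behaving like $x^{a-1}$ at the origin and like $x^{-\nu-2}$ at infinity), and I would compute it by the Mellin--Barnes/Parseval method: inserting the Mellin transforms of $(1+a_{\mathrm s}\hat I x)^{-\nu}$ and of $(1+\Omega_{\mathrm p}x/b)^{-(a+1)}$, each a ratio of Gamma functions, and choosing a vertical contour separating the two families of poles, the integral becomes a contour integral of a product of four Gamma functions, which by definition is a $G^{2,2}_{2,2}$-function of argument $\Omega_{\mathrm p}/(b\,a_{\mathrm s}\hat I)$. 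A power-shift identity of the Meijer $G$-function (absorbing a factor $x^{-a}$, which subtracts $a$ from every parameter and reorders them) then brings its parameter list to the form $\big(\begin{smallmatrix}-a,\,1-a\\0,\,\nu-a\end{smallmatrix}\big)$, yielding $\mathcal J(a,b)=T_1(a,b)$.

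Finally I would reassemble: the two ``$1$'' contributions give $T_1(N_{\mathrm n},\Omega_{\mathrm n})+T_1(N_{\mathrm f},\Omega_{\mathrm f})$; collecting the prefactors $1/(\Omega_{\mathrm f}^\tau\tau!)$ together with the Gamma-density normalizations converts $\sum_{\tau=0}^{N_{\mathrm f}-1}(\text{const}_\tau)\,\mathcal J(N_{\mathrm n}+\tau,\Omega)$ into exactly $T_2(N_{\mathrm n},N_{\mathrm f},\Omega_{\mathrm n},\Omega_{\mathrm f})$, and the mirror term gives $T_2(N_{\mathrm f},N_{\mathrm n},\Omega_{\mathrm f},\Omega_{\mathrm n})$; since these enter with a minus sign (from the ``$-$'' in $F_{g_j}$), \eqref{Eqn:CaseII:RsClosed} follows. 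I expect the one genuinely delicate point to be the closed-form evaluation of $\mathcal J$: one must justify the Mellin--Barnes contour (legitimate because $\Re s$ can be taken in $(0,\min\{a,\nu\})$) and then carry out the Meijer $G$-function argument-inversion and power-shift bookkeeping so that the parameters land precisely in the normalized form stated in the theorem; the remaining steps are elementary.
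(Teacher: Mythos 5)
Your proposal is correct and follows essentially the same route as the paper's proof: the max-order-statistic decomposition $f_{g_{\mathrm s}}=f_{g_{\mathrm n}}F_{g_{\mathrm f}}+f_{g_{\mathrm f}}F_{g_{\mathrm n}}$ (your indicator split is the same thing), the reduction of every term to a Gamma-over-exponential ratio with the combined scale $\Omega$, the beta-prime density of $X_{\mathrm s}$, and the evaluation of $\int_0^\infty(1+a_{\mathrm s}\hat Ix)^{-\nu}x^{\alpha-1}(x/\beta+1/\Omega_{\mathrm p})^{-(\alpha+1)}\mathrm dx$ as a $G^{2,2}_{2,2}$ via the Mellin--Barnes product formula, which is exactly what the paper does by writing each factor as a $G^{1,1}_{1,1}$ and invoking the standard two-$G$-function integral. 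The only difference is bookkeeping order (you integrate out $g_{\mathrm p}$ inside a master quantity $\mathcal J$ rather than first forming $f_{X_{\mathrm s}}$), which is immaterial.
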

\begin{proof}
	See Appendix~\ref{CaseII:Proof_Rs}.
\end{proof}
On the other hand, the ER for the weak user is given by 
\begin{small}
\begin{align}
	& R_{\mathrm w} = \dfrac{-1}{\nu} \log_2 \left[ \int_0^\infty \left(1 + \dfrac{a_{\mathrm w} \hat I x}{1 + a_{\mathrm s} \hat I x}\right)^{-\nu} f_{X_{\mathrm w}}(x) \mathrm dx\right] \notag \\
	= & \ \dfrac{-1}{\nu} \log_2 \left[ \int_0^\infty \left(1 + \tfrac{2}{K}\hat Ix\right)^{-\nu}\left(1 + a_{\mathrm s} \hat I x\right)^{\nu} f_{X_{\mathrm w}}(x) \mathrm dx\right]. \label{Eqn:CaseII:Rw_def}
\end{align}
\end{small}
\begin{theorem} \label{CaseII:Theorem_Rw}
	When instantaneous IL-CSI and SL-CSI are available at the ST, the ER of the weak user is given by
	\begin{align}
		\!\!\!R_{\mathrm w} \!=\!  \dfrac{-1}{\nu} \log_2[T_3(N_{\mathrm n}, N_{\mathrm f}, \Omega_{\mathrm n}, \Omega_{\mathrm f}) + T_3(N_{\mathrm f}, N_{\mathrm n}, \Omega_{\mathrm f}, \Omega_{\mathrm n})],\! \label{Eqn:CaseII:RwClosed}
	\end{align}
where $T_3(\alpha_1, \alpha_2, \beta_1, \beta_2) = \sum_{\tau = 0}^{\alpha_2 - 1} \tfrac{\Omega_{\mathrm p} (2\hat I/K)^{-(\alpha_1 + \tau)}}{\Gamma(\alpha_1) \Gamma(\nu) \Gamma(-\nu) \beta_1^{\alpha_1} \beta_2^{\tau} \tau!}$ $\mathcal G_{1, 1:1, 1:1, 1}^{1, 1:1, 1:1, 1} \left(\begin{smallmatrix} 1 - (\alpha_1 + \tau) \\ \nu - (\alpha_1 + \tau)\end{smallmatrix} \left\vert \begin{smallmatrix} 1 - \nu \\ 0 \end{smallmatrix} \right\vert \left. \begin{smallmatrix} -(\alpha_1 + \tau) \\ 0 \end{smallmatrix} \right\vert \tfrac{K}{2}a_{\mathrm s}, \tfrac{K\Omega_{\mathrm p}}{2\Omega \hat I}  \right),$
and 
$\mathcal G(\cdot)$ denotes the extended generalized bivariate Meijer G-function (EGBMGF).
\end{theorem}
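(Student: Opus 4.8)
The plan is to mirror the Mellin--Barnes argument behind Theorem~\ref{CaseII:Theorem_Rs}, but now carrying two transform variables, since the weak-user integrand in \eqref{Eqn:CaseII:Rw_def} contains the extra algebraic factor $(1+a_{\mathrm s}\hat I x)^{\nu}$. First I would determine $f_{X_{\mathrm w}}(\cdot)$. Because $g_{\mathrm n}=\sum_{i=1}^{N_{\mathrm n}}|h_{\mathrm n,i}|^2$ and $g_{\mathrm f}=\sum_{i=1}^{N_{\mathrm f}}|h_{\mathrm f,i}|^2$ are independent, gamma distributed with shapes $N_{\mathrm n},N_{\mathrm f}$ and scales $\Omega_{\mathrm n},\Omega_{\mathrm f}$, and $g_{\mathrm w}=\min\{g_{\mathrm n},g_{\mathrm f}\}$, I would write $f_{g_{\mathrm w}}(y)=f_{g_{\mathrm n}}(y)[1-F_{g_{\mathrm f}}(y)]+f_{g_{\mathrm f}}(y)[1-F_{g_{\mathrm n}}(y)]$, expand each complementary CDF of integer shape as the finite sum $e^{-y/\Omega_{\mathrm f}}\sum_{\tau=0}^{N_{\mathrm f}-1}(y/\Omega_{\mathrm f})^{\tau}/\tau!$, and merge the two exponentials into $e^{-y/\Omega}$ with $\Omega=\Omega_{\mathrm n}\Omega_{\mathrm f}/(\Omega_{\mathrm n}+\Omega_{\mathrm f})$ -- exactly the constant in the statement. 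Feeding this into the ratio identity $f_{X_{\mathrm w}}(x)=\int_0^\infty z\,f_{g_{\mathrm w}}(xz)f_{g_{\mathrm p}}(z)\,\mathrm dz$ with $f_{g_{\mathrm p}}(z)=\Omega_{\mathrm p}^{-1}e^{-z/\Omega_{\mathrm p}}$, and using $\int_0^\infty z^{m+1}e^{-z(x/\Omega+1/\Omega_{\mathrm p})}\,\mathrm dz=\Gamma(m+2)(x/\Omega+1/\Omega_{\mathrm p})^{-(m+2)}$, turns each summand into a constant times $x^{N_{\mathrm n}-1+\tau}\big(1+\tfrac{\Omega_{\mathrm p}}{\Omega}x\big)^{-(N_{\mathrm n}+\tau+1)}$ (and symmetrically with $\mathrm n\leftrightarrow\mathrm f$). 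The two $\tau$-sums, one from each ordering of $\{g_{\mathrm n},g_{\mathrm f}\}$, are precisely the source of the two swapped summands $T_3(N_{\mathrm n},N_{\mathrm f},\Omega_{\mathrm n},\Omega_{\mathrm f})$ and $T_3(N_{\mathrm f},N_{\mathrm n},\Omega_{\mathrm f},\Omega_{\mathrm n})$.

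Next I would substitute this density into \eqref{Eqn:CaseII:Rw_def} and, within each term, make the normalising change of variable $t=\tfrac{2}{K}\hat I x$, so that the integrand becomes a product of three purely algebraic factors in $t$: the decaying power $(1+t)^{-\nu}$; the growing SIC factor $\big(1+\tfrac{K}{2}a_{\mathrm s}t\big)^{\nu}$, whose coefficient $\tfrac{K}{2}a_{\mathrm s}$ is the first EGBMGF argument; and the rational density factor $t^{N_{\mathrm n}-1+\tau}\big(1+\tfrac{K\Omega_{\mathrm p}}{2\Omega\hat I}t\big)^{-(N_{\mathrm n}+\tau+1)}$, whose coefficient $\tfrac{K\Omega_{\mathrm p}}{2\Omega\hat I}$ is the second. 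I would keep $(1+t)^{-\nu}$ as it stands and represent the other two factors as single Meijer $G$-functions via $(1+z)^{-c}=\Gamma(c)^{-1}G_{1,1}^{1,1}\!\left(z\left\vert\begin{smallmatrix}1-c\\0\end{smallmatrix}\right.\right)$, taking $c=-\nu$ for the growing factor (the origin of the $\Gamma(-\nu)$ in $T_3$; this step needs $\nu\notin\mathbb Z$, with the general case recovered by analytic continuation) and $c=N_{\mathrm n}+\tau+1$ for the rational factor (whose $\Gamma(N_{\mathrm n}+\tau+1)$ cancels the combinatorial factor $\Gamma(N_{\mathrm n}+\tau+1)$ thrown out by the ratio integral, which is why no such term survives in $T_3$). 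Writing each Meijer $G$ as its defining Mellin--Barnes contour integral, with contour variable $s_1$ carrying $\tfrac{K}{2}a_{\mathrm s}$ and $s_2$ carrying $\tfrac{K\Omega_{\mathrm p}}{2\Omega\hat I}$, and interchanging the $t$-integral with the two contour integrals, the inner integral collapses to a Beta function, $\int_0^\infty t^{(N_{\mathrm n}-1+\tau)+s_1+s_2}(1+t)^{-\nu}\,\mathrm dt=\Gamma\big((N_{\mathrm n}+\tau)+s_1+s_2\big)\Gamma\big(\nu-(N_{\mathrm n}+\tau)-s_1-s_2\big)/\Gamma(\nu)$, supplying the Gamma factors that couple $s_1$ and $s_2$ and the denominator $\Gamma(\nu)$.

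What remains is a double Mellin--Barnes integral whose integrand is a ratio of products of Gamma functions linear in $s_1+s_2$ and in $s_1,s_2$ separately, weighted by $\big(\tfrac{K}{2}a_{\mathrm s}\big)^{s_1}\big(\tfrac{K\Omega_{\mathrm p}}{2\Omega\hat I}\big)^{s_2}$; by the definition of the EGBMGF this is exactly the $\mathcal G^{1,1:1,1:1,1}_{1,1:1,1:1,1}$ appearing in $T_3$, and reading off its six parameter rows, summing over $\tau$ and over the two orderings, and collecting the prefactor yields \eqref{Eqn:CaseII:RwClosed}. I expect the main obstacle to be the bookkeeping of this last step: correctly identifying the three parameter pairs of the EGBMGF from the three families of Gamma factors, and -- more delicately -- checking that the vertical strip of convergence of the Beta integral, $0<\Re\big((N_{\mathrm n}+\tau)+s_1+s_2\big)<\Re\nu$, can be met simultaneously by the contours prescribed for the two Meijer $G$-representations, so that the Fubini interchange of the $t$-integral with the two contour integrals is legitimate; a secondary subtlety is justifying the $\Gamma(-\nu)$ analytic continuation, which is forced on us because the SIC factor enters with a positive exponent.
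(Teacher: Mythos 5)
Your proposal is correct and follows essentially the same route as the paper's proof: derive $f_{g_{\mathrm w}}$ from the min-order statistic, obtain $f_{X_{\mathrm w}}$ via the ratio integral against $f_{g_{\mathrm p}}$, and evaluate the resulting integral of the three algebraic factors as an EGBMGF with arguments $\tfrac{K}{2}a_{\mathrm s}$ and $\tfrac{K\Omega_{\mathrm p}}{2\Omega\hat I}$. The only difference is cosmetic: where you re-derive the reduction by hand (double Mellin--Barnes representation plus the Beta-function evaluation of the inner $t$-integral), the paper rewrites all three factors as $G^{1,1}_{1,1}$ functions and cites the tabulated identity for the integral of a product of three Meijer G-functions, which encapsulates exactly the computation you describe.
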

\begin{proof}
	See Appendix~\ref{CaseII:Proof_Rw}.
\end{proof}
\noindent The sum ER, $R_{\mathrm{sum}} = R_{\mathrm s} + R_{\mathrm w}$, for a user-pair of the downlink NOMA system is then obtained by adding~\eqref{Eqn:CaseII:RsClosed} and~\eqref{Eqn:CaseII:RwClosed}.
%=======================================
\section{Case I--S: Instantaneous IL-CSI and Statistical SL-CSI}  \label{sec:I-S}
In this section, we consider the scenario where only statistical CSI\footnote{Statistical CSI refers to knowledge of the channel distribution and the average channel gain.} is available at the ST regarding the ST-$\mathrm{U_n}$ and ST-$\mathrm{U_f}$ links, whereas instantaneous CSI is available for the ST-PR link. In this case, $\mathrm{U_n}$ is considered as the strong user $\mathrm{U_s}$, and $\mathrm{U_f}$ is considered as the weak user $\mathrm{U_w}$. Similar to the previous case, the instantaneous power transmitted from the ST is given by $P_{\mathrm t}^*(g_{\mathrm p}) = I/g_{\mathrm p}$. Following the NOMA principle, the SNR at $\mathrm{U_s}$ to decode $x_{\mathrm s}$ is given by $\gamma_{\mathrm s} = a_{\mathrm s} \hat I g_{\mathrm n}/g_{\mathrm p}$. On the other hand, since $x_\mathrm w$ needs to be decoded at both $\mathrm{U_s}$ and $\mathrm{U_w}$, the SINR to decode $x_\mathrm{w}$ is given by $\gamma_{\mathrm w} = (a_{\mathrm w} \hat I g_{\mathrm w}/g_{\mathrm p})/(1 + a_{\mathrm s} \hat I g_{\mathrm w}/g_{\mathrm p}) = a_{\mathrm w} \hat I X_{\mathrm w}/(1 + a_{\mathrm s} \hat I X_{\mathrm w})$.

Using the expressions for the PDFs of $g_{\mathrm p}$ and $g_{u'}, u' \in \{\mathrm n, \mathrm f\}$ (given in Appendix~\ref{CaseII:Proof_Rs}), the PDF of the random variable $X_{\mathrm n} \triangleq g_{\mathrm n}/g_{\mathrm p}$ is given by $f_{X_{\mathrm n}}(x) = \int_0^\infty  y f_{g_{\mathrm n}}(xy) f_{g_{\mathrm p}}(y) \mathrm dy 	=  \tfrac{N_{\mathrm n} x^{N_{\mathrm n} - 1}}{\Omega_{\mathrm n}^{N_{\mathrm n}} \Omega_{\mathrm p}} \left( \tfrac{x}{\Omega_{\mathrm n}} + \tfrac{1}{\Omega_{\mathrm p}} \right)^{-(N_{\mathrm n} + 1)}$.
Using the expression for $f_{X_{\mathrm n}}(x)$ and solving the integral by following the steps given in~Appendix~\ref{CaseII:Proof_Rs}, the ER for the strong user is given by 
\begin{small}
\begin{align}
	& R_{\mathrm s} = \dfrac{-1}{\nu} \log_2 \left[ \int_0^\infty (1 + a_{\mathrm s} \hat I x)^{-\nu} f_{X_{\mathrm n}} (x) \mathrm dx\right] \notag \\
	= & \ \dfrac{-1}{\nu} \log_2 \left[ \dfrac{[\Omega_{\mathrm p}/(\Omega_{\mathrm n} a_{\mathrm s} \hat I)]^{N_{\mathrm n}}}{\Gamma(\nu) \Gamma(N_{\mathrm n})} G_{2, 2}^{2, 2} \left( \dfrac{\Omega_{\mathrm p}}{\Omega_{\mathrm n} a_{\mathrm s} \hat I}\left\vert \begin{smallmatrix} -N_{\mathrm n}, & 1 - N_{\mathrm n} \\[0.6em]0, & \nu - N_{\mathrm n}\end{smallmatrix}\right. \right)\right]. \label{Eqn:CaseIS:RsClosed}
\end{align}
\end{small}

On the other hand, the ER of the weak user will be the same as for the previous case, and its closed-form expression is given by~\eqref{Eqn:CaseII:RwClosed}. However, in the previous case, the weak user is chosen dynamically, whereas for the current case the weak user $\mathrm{U_w}$ is fixed as $\mathrm{U_f}$. The sum ER, $R_{\mathrm{sum}} = R_{\mathrm s} + R_{\mathrm w}$, is then obtained by adding~\eqref{Eqn:CaseIS:RsClosed} and~\eqref{Eqn:CaseII:RwClosed}.
%=======================================
\section{Case S--I: Statistical IL-CSI and Instantaneous SL-CSI}  \label{sec:S-I}
In a practical system, it is sometimes difficult for the secondary network to have a proper coordination with the primary network and therefore, the ST may not have access to instantaneous CSI regarding the ST-PR link. Therefore, in this section, we consider the scenario where the ST has only the statistical CSI, i.e., the knowledge of $\Omega_{\mathrm p}$ as well as the knowledge of the distribution of $g_{\mathrm p}$, along with instantaneous CSI for the ST-$\mathrm{U_n}$ and ST-$\mathrm{U_f}$ links. Therefore, $\mathrm{U_s}$ and $\mathrm{U_w}$ are defined in the same fashion as given in~Section~\ref{sec:I-I}. However, in this case, the QoS at the PR is protected through a statistical constraint.
Denoting the power transmitted from the ST\footnote{Note that in this case, since the ST does not have the instantaneous information regarding $g_{\mathrm p}$, the transmit power from the ST is not a function of $g_{\mathrm p}$.} by $P_{\mathrm t}$, we have $\Pr(g_{\mathrm p} P_{\mathrm t} > I) \leq \delta$ which implies $P_{\mathrm t} \leq I/F_{g_{\mathrm p}}^{-1}(1- \delta)$. Using the fact that $g_{\mathrm p}$ is exponentially distributed with $\mathbb E\{g_{\mathrm p}\} = \Omega_{\mathrm p}$, the maximum allowable transmit power for the ST is given by $P_{\mathrm t}^* = -I/(\Omega_{\mathrm p} \ln \delta)$. Also, following the steps similar to those in~Section~\ref{sec:I-I}, we have $\gamma_{\mathrm s} = \tfrac{a_{\mathrm s}P_{\mathrm t}^* g_{\mathrm s}}{\mathscr N_0 \mathcal B} = a_{\mathrm s}\hat P_{\mathrm t}^* g_{\mathrm s}$ and $\gamma_{\mathrm w} = \tfrac{a_{\mathrm w}P_{\mathrm t}^* g_{\mathrm w}}{\mathscr N_0 \mathcal B + a_{\mathrm s}P_{\mathrm t}^* g_{\mathrm w}} = \tfrac{a_{\mathrm w} \hat P_{\mathrm t}^* g_{\mathrm w}}{1 + a_{\mathrm s} \hat P_{\mathrm t}^* g_{\mathrm w}}$ where $\hat P_{\mathrm t}^* \triangleq P_{\mathrm t}^*/(\mathscr N_0 \mathcal B)$.

The ER for the strong user $\mathrm{U_s}$ is given by 
\begin{align}
	R_{\mathrm s} = \dfrac{-1}{\nu} \log_2 \left[ \int_0^\infty (1 + a_{\mathrm s}  \hat P_{\mathrm t}^* x)^{-\nu} f_{g_{\mathrm s}}(x) \mathrm dx\right]. \label{Eqn:CaseSI:Rs_def}
\end{align}
Substituting the expression for $f_{g_{\mathrm s}}(x)$ (given in~Appendix~\ref{CaseII:Proof_Rs}) into the preceding equation and solving the integral similar to that in~Appendix~\ref{CaseII:Proof_Rs} yields
\begin{align}
		R_{\mathrm s} = -\dfrac{1}{\nu} & \log_2 \left[ T_4 (N_{\mathrm n}, \Omega_{\mathrm n}) + T_4 (N_{\mathrm f}, \Omega_{\mathrm f}) \right. \notag \\
		& \!\!\!\!\!\left. - T_5(N_{\mathrm n}, N_{\mathrm f}, \Omega_{\mathrm n}, \Omega_{\mathrm f}) - T_5(N_{\mathrm f}, N_{\mathrm n}, \Omega_{\mathrm f}, \Omega_{\mathrm n})\right], \label{Eqn:CaseSI:RsClosed}
	\end{align}
where $T_4(\alpha, \beta) = G_{2, 1}^{1, 2} \left( a_{\mathrm s} \hat P_{\mathrm t}^* \beta \left\vert \begin{smallmatrix} 1-\nu, \ 1-\alpha \\ 0 \end{smallmatrix}\right. \right)/[\Gamma(\alpha) \Gamma(\nu)]$ and $T_5(\alpha_1, \alpha_2, \beta_1, \beta_2) = \! \sum_{\tau = 0}^{\alpha_2 - 1} \! \tfrac{\Omega^{\alpha_1 + \tau} G_{2, 1}^{1, 2} \left(\! a_{\mathrm s} \hat P_{\mathrm t}^* \Omega \left\vert \begin{smallmatrix} 1-\nu, \ 1- (\alpha_1 + \tau) \\ 0 \end{smallmatrix} \right. \!\right)}{\Gamma(\alpha_1) \Gamma(\nu) \beta_1^{\alpha_1} \beta_2^{\tau} \tau!}$.
On the other hand, the ER of the weak user is given by 
\begin{align}
	\!\!R_{\mathrm w} \!= \!\dfrac{-1}{\nu} \log_2 \left[ \int_0^\infty \!\!\! (1 + \tfrac{2}{K} \hat P_{\mathrm t}^* x)^{-\nu} (1 + a_{\mathrm s} \hat P_{\mathrm t}^* x)^{\nu} f_{g_{\mathrm w}}(x) \mathrm dx\right]. \label{Eqn:CaseSI:Rw_def}
\end{align}
Substituting the expression for $f_{g_{\mathrm w}}(x)$ (derived in~Appendix~\ref{CaseII:Proof_Rw}) into~\eqref{Eqn:CaseSI:Rw_def} and solving the integral similar to that in~Appendix~\ref{CaseII:Proof_Rw}, we obtain
\begin{align}
	\!\!R_{\mathrm w} \!=\! \dfrac{-1}{\nu}\! \log_2 \left[ T_6(N_{\mathrm n}, \!N_{\mathrm f}, \!\Omega_{\mathrm n}, \!\Omega_{\mathrm f}) \!+\! T_6(N_{\mathrm f},\! N_{\mathrm n},\! \Omega_{\mathrm f},\! \Omega_{\mathrm n})\right],\label{Eqn:CaseSI:RwClosed}
\end{align}
where $T_6(\alpha_1, \alpha_2, \beta_1, \beta_2) = \sum_{\tau = 0}^{\alpha_2 - 1} \tfrac{(2 \hat P_{\mathrm t}^*/K)^{-(\alpha_1 + \tau)}}{\Gamma(\alpha_1) \Gamma(\nu) \Gamma(-\nu) \beta_1^{\alpha_1} \beta_2^{\tau} \tau!}$ $\mathcal G_{1, 1:1, 1:0, 1}^{1, 1:1, 1:1, 0} \left( \begin{smallmatrix} 1- (\alpha_1 + \tau) \\ \nu - (\alpha_1 + \tau) \end{smallmatrix} \left\vert \begin{smallmatrix} 1 + \nu \\ 0 \end{smallmatrix} \right\vert \left. \begin{smallmatrix} - \\ 0 \end{smallmatrix}\right\vert \tfrac{K}{2} a_{\mathrm s}, \tfrac{K}{2\Omega  \hat P_{\mathrm t}^*}\right)$.
\noindent  The sum ER, $R_{\mathrm{sum}} = R_{\mathrm s} + R_{\mathrm w}$, is then obtained by combining~\eqref{Eqn:CaseSI:RsClosed} and~\eqref{Eqn:CaseSI:RwClosed}. 
%=======================================
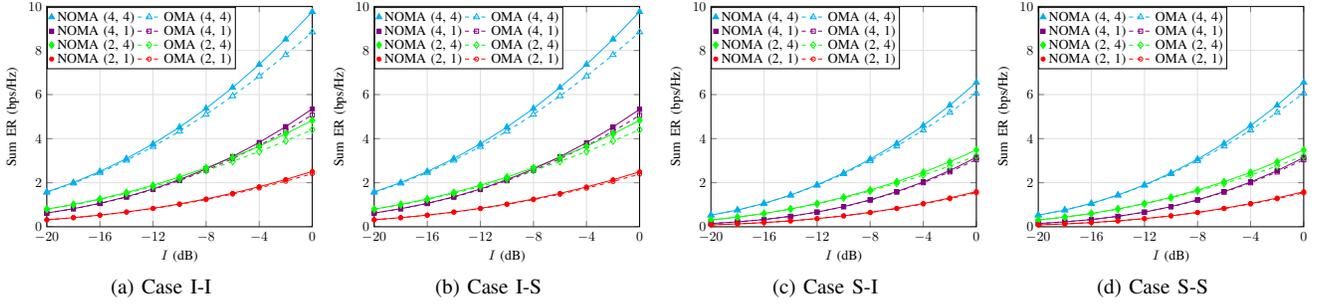
\begin{figure*}[t]
\centering
\begin{subfigure}{.24\textwidth}
  \centering
  %% II Case
\pgfplotsset{compat=1.9}
\resizebox{\columnwidth}{!}{
\begin{tikzpicture}   
      \begin{axis}[
      	xlabel={$I$ (dB)},
	xtick={-20, -16, -12, -8, -4, 0},
	xmin=-20,xmax=0,
	ylabel={Sum ER (bps/Hz)},
	ymin=0,ymax=10,
	grid=both,
	minor grid style={gray!25},
	major grid style={gray!25},
	legend columns=2, 
legend style={{nodes={scale=1, transform shape}}, at={(0.001,0.716)},  anchor=south west, draw=black,fill=white,legend cell align=left,inner sep=1pt,row sep = -2pt},]
	\addplot[mark=triangle*, only marks, mark size=2.5pt, color=cyan] table [y=NOMA_4_4, x=I_dB,col sep = comma]{Data_Rate_vs_IdB_II.csv};
	\addlegendentry{NOMA (4, 4)}
	\addplot[color=cyan,dashed, line width=0.5pt, mark size=2.5pt, mark=triangle, mark options={solid, cyan}] table [y=OMA_4_4, x=I_dB,col sep = comma]{Data_Rate_vs_IdB_II.csv};
	\addlegendentry{OMA (4, 4)}
	\addplot[mark=square*, only marks, mark size=1.5pt, color=violet] table [y=NOMA_4_1, x=I_dB,col sep = comma]{Data_Rate_vs_IdB_II.csv};
	\addlegendentry{NOMA (4, 1)}
	\addplot[color=violet,dashed, line width=0.5pt, mark size=1.5pt, mark=square, mark options={solid, violet}] table [y=OMA_4_1, x=I_dB,col sep = comma]{Data_Rate_vs_IdB_II.csv};
	\addlegendentry{OMA (4, 1)}
	\addplot[mark=diamond*, only marks, mark size=2.5pt, color=green] table [y=NOMA_2_4, x=I_dB,col sep = comma]{Data_Rate_vs_IdB_II.csv};
	\addlegendentry{NOMA (2, 4)}
	\addplot[color=green,dashed, line width=0.5pt, mark size=2.5pt, mark=diamond, mark options={solid, green}] table [y=OMA_2_4, x=I_dB,col sep = comma]{Data_Rate_vs_IdB_II.csv};
	\addlegendentry{OMA (2, 4)}
	\addplot[mark=*, only marks, mark size=1.5pt, color=red] table [y=NOMA_2_1, x=I_dB,col sep = comma]{Data_Rate_vs_IdB_II.csv};
	\addlegendentry{NOMA (2, 1)}
	\addplot[color=red,dashed, line width=0.5pt, mark size=1.5pt, mark=o, mark options={solid, red}] table [y=OMA_2_1, x=I_dB,col sep = comma]{Data_Rate_vs_IdB_II.csv};
	\addlegendentry{OMA (2, 1)}
	\addplot[line width=0.5pt,solid,color=red] table [y=NOMA_2_1, x=I_dB,col sep = comma]{Data_Rate_vs_IdB_II.csv};
	\addplot[line width=0.5pt,solid,color=green] table [y=NOMA_2_4, x=I_dB,col sep = comma]{Data_Rate_vs_IdB_II.csv};
	\addplot[line width=0.5pt,solid,color=violet] table [y=NOMA_4_1, x=I_dB,col sep = comma]{Data_Rate_vs_IdB_II.csv};
	\addplot[line width=0.5pt,solid,color=cyan] table [y=NOMA_4_4, x=I_dB,col sep = comma]{Data_Rate_vs_IdB_II.csv};
	\end{axis}
\end{tikzpicture}
}
  \vskip-0.2in
  \caption{Case I-I}
  \label{fig:1a}
\end{subfigure}%
\begin{subfigure}{.24\textwidth}
  \centering
  %% IS Case
\pgfplotsset{compat=1.9}
\resizebox{\columnwidth}{!}{
\begin{tikzpicture}   
      \begin{axis}[
      	xlabel={$I$ (dB)},
	xtick={-20, -16, -12, -8, -4, 0},
	xmin=-20,xmax=0,
	ylabel={Sum ER (bps/Hz)},
	ymin=0,ymax=10,
	grid=both,
	minor grid style={gray!25},
	major grid style={gray!25},
	legend columns=2, 
legend style={{nodes={scale=1, transform shape}}, at={(0.001,0.716)},  anchor=south west, draw=black,fill=white,legend cell align=left,inner sep=1pt,row sep = -2pt},]
	\addplot[mark=triangle*, only marks, mark size=2.5pt, color=cyan] table [y=NOMA_4_4, x=I_dB,col sep = comma]{Data_Rate_vs_IdB_IS.csv};
	\addlegendentry{NOMA (4, 4)}
	\addplot[color=cyan,dashed, line width=0.5pt, mark size=2.5pt, mark=triangle, mark options={solid, cyan}] table [y=OMA_4_4, x=I_dB,col sep = comma]{Data_Rate_vs_IdB_IS.csv};
	\addlegendentry{OMA (4, 4)}
	\addplot[mark=square*, only marks, mark size=1.5pt, color=violet] table [y=NOMA_4_1, x=I_dB,col sep = comma]{Data_Rate_vs_IdB_IS.csv};
	\addlegendentry{NOMA (4, 1)}
	\addplot[color=violet,dashed, line width=0.5pt, mark size=1.5pt, mark=square, mark options={solid, violet}] table [y=OMA_4_1, x=I_dB,col sep = comma]{Data_Rate_vs_IdB_IS.csv};
	\addlegendentry{OMA (4, 1)}
	\addplot[mark=diamond*, only marks, mark size=2.5pt, color=green] table [y=NOMA_2_4, x=I_dB,col sep = comma]{Data_Rate_vs_IdB_IS.csv};
	\addlegendentry{NOMA (2, 4)}
	\addplot[color=green,dashed, line width=0.5pt, mark size=2.5pt, mark=diamond, mark options={solid, green}] table [y=OMA_2_4, x=I_dB,col sep = comma]{Data_Rate_vs_IdB_IS.csv};
	\addlegendentry{OMA (2, 4)}
	\addplot[mark=*, only marks, mark size=1.5pt, color=red] table [y=NOMA_2_1, x=I_dB,col sep = comma]{Data_Rate_vs_IdB_IS.csv};
	\addlegendentry{NOMA (2, 1)}
	\addplot[color=red,dashed, line width=0.5pt, mark size=1.5pt, mark=o, mark options={solid, red}] table [y=OMA_2_1, x=I_dB,col sep = comma]{Data_Rate_vs_IdB_IS.csv};
	\addlegendentry{OMA (2, 1)}
	\addplot[line width=0.5pt,solid,color=red] table [y=NOMA_2_1, x=I_dB,col sep = comma]{Data_Rate_vs_IdB_IS.csv};
	\addplot[line width=0.5pt,solid,color=green] table [y=NOMA_2_4, x=I_dB,col sep = comma]{Data_Rate_vs_IdB_IS.csv};
	\addplot[line width=0.5pt,solid,color=violet] table [y=NOMA_4_1, x=I_dB,col sep = comma]{Data_Rate_vs_IdB_IS.csv};
	\addplot[line width=0.5pt,solid,color=cyan] table [y=NOMA_4_4, x=I_dB,col sep = comma]{Data_Rate_vs_IdB_IS.csv};
	\end{axis}
\end{tikzpicture}
}
  \vskip-0.2in
  \caption{Case I-S}
    \label{fig:1b}
\end{subfigure}
\begin{subfigure}{.24\textwidth}
  \centering
  %% SI Case
\pgfplotsset{compat=1.9}
\resizebox{\columnwidth}{!}{
\begin{tikzpicture}   
      \begin{axis}[
      	xlabel={$I$ (dB)},
	xtick={-20, -16, -12, -8, -4, 0},
	xmin=-20,xmax=0,
	ylabel={Sum ER (bps/Hz)},
	ymin=0,ymax=10,
	grid=both,
	minor grid style={gray!25},
	major grid style={gray!25},
	legend columns=2, 
legend style={{nodes={scale=1, transform shape}}, at={(0.001,0.716)},  anchor=south west, draw=black,fill=white,legend cell align=left,inner sep=1pt,row sep = -2pt},]
	\addplot[mark=triangle*, only marks, mark size=2.5pt, color=cyan] table [y=NOMA_4_4, x=I_dB,col sep = comma]{Data_Rate_vs_IdB_SI.csv};
	\addlegendentry{NOMA (4, 4)}
	\addplot[color=cyan,dashed, line width=0.5pt, mark size=2.5pt, mark=triangle, mark options={solid, cyan}] table [y=OMA_4_4, x=I_dB,col sep = comma]{Data_Rate_vs_IdB_SI.csv};
	\addlegendentry{OMA (4, 4)}
	\addplot[mark=square*, only marks, mark size=1.5pt, color=violet] table [y=NOMA_4_1, x=I_dB,col sep = comma]{Data_Rate_vs_IdB_SI.csv};
	\addlegendentry{NOMA (4, 1)}
	\addplot[color=violet,dashed, line width=0.5pt, mark size=1.5pt, mark=square, mark options={solid, violet}] table [y=OMA_4_1, x=I_dB,col sep = comma]{Data_Rate_vs_IdB_SI.csv};
	\addlegendentry{OMA (4, 1)}
	\addplot[mark=diamond*, only marks, mark size=2.5pt, color=green] table [y=NOMA_2_4, x=I_dB,col sep = comma]{Data_Rate_vs_IdB_SI.csv};
	\addlegendentry{NOMA (2, 4)}
	\addplot[color=green,dashed, line width=0.5pt, mark size=2.5pt, mark=diamond, mark options={solid, green}] table [y=OMA_2_4, x=I_dB,col sep = comma]{Data_Rate_vs_IdB_SI.csv};
	\addlegendentry{OMA (2, 4)}
	\addplot[mark=*, only marks, mark size=1.5pt, color=red] table [y=NOMA_2_1, x=I_dB,col sep = comma]{Data_Rate_vs_IdB_SI.csv};
	\addlegendentry{NOMA (2, 1)}
	\addplot[color=red,dashed, line width=0.5pt, mark size=1.5pt, mark=o, mark options={solid, red}] table [y=OMA_2_1, x=I_dB,col sep = comma]{Data_Rate_vs_IdB_SI.csv};
	\addlegendentry{OMA (2, 1)}
	\addplot[line width=0.5pt,solid,color=red] table [y=NOMA_2_1, x=I_dB,col sep = comma]{Data_Rate_vs_IdB_SI.csv};
	\addplot[line width=0.5pt,solid,color=green] table [y=NOMA_2_4, x=I_dB,col sep = comma]{Data_Rate_vs_IdB_SI.csv};
	\addplot[line width=0.5pt,solid,color=violet] table [y=NOMA_4_1, x=I_dB,col sep = comma]{Data_Rate_vs_IdB_SI.csv};
	\addplot[line width=0.5pt,solid,color=cyan] table [y=NOMA_4_4, x=I_dB,col sep = comma]{Data_Rate_vs_IdB_SI.csv};
	\end{axis}
\end{tikzpicture}
}
  \vskip-0.2in
  \caption{Case S-I}
    \label{fig:1c}
\end{subfigure}%
\begin{subfigure}{.24\textwidth}
  \centering
  %% SS Case
\pgfplotsset{compat=1.9}
\resizebox{\columnwidth}{!}{
\begin{tikzpicture}   
      \begin{axis}[
      	xlabel={$I$ (dB)},
	xtick={-20, -16, -12, -8, -4, 0},
	xmin=-20,xmax=0,
	ylabel={Sum ER (bps/Hz)},
	ymin=0,ymax=10,
	grid=both,
	minor grid style={gray!25},
	major grid style={gray!25},
	legend columns=2, 
legend style={{nodes={scale=1, transform shape}}, at={(0.001,0.716)},  anchor=south west, draw=black,fill=white,legend cell align=left,inner sep=1pt,row sep = -2pt},]
	\addplot[mark=triangle*, only marks, mark size=2.5pt, color=cyan] table [y=NOMA_4_4, x=I_dB,col sep = comma]{Data_Rate_vs_IdB_SS.csv};
	\addlegendentry{NOMA (4, 4)}
	\addplot[color=cyan,dashed, line width=0.5pt, mark size=2.5pt, mark=triangle, mark options={solid, cyan}] table [y=OMA_4_4, x=I_dB,col sep = comma]{Data_Rate_vs_IdB_SS.csv};
	\addlegendentry{OMA (4, 4)}
	\addplot[mark=square*, only marks, mark size=1.5pt, color=violet] table [y=NOMA_4_1, x=I_dB,col sep = comma]{Data_Rate_vs_IdB_SS.csv};
	\addlegendentry{NOMA (4, 1)}
	\addplot[color=violet,dashed, line width=0.5pt, mark size=1.5pt, mark=square, mark options={solid, violet}] table [y=OMA_4_1, x=I_dB,col sep = comma]{Data_Rate_vs_IdB_SS.csv};
	\addlegendentry{OMA (4, 1)}
	\addplot[mark=diamond*, only marks, mark size=2.5pt, color=green] table [y=NOMA_2_4, x=I_dB,col sep = comma]{Data_Rate_vs_IdB_SS.csv};
	\addlegendentry{NOMA (2, 4)}
	\addplot[color=green,dashed, line width=0.5pt, mark size=2.5pt, mark=diamond, mark options={solid, green}] table [y=OMA_2_4, x=I_dB,col sep = comma]{Data_Rate_vs_IdB_SS.csv};
	\addlegendentry{OMA (2, 4)}
	\addplot[mark=*, only marks, mark size=1.5pt, color=red] table [y=NOMA_2_1, x=I_dB,col sep = comma]{Data_Rate_vs_IdB_SS.csv};
	\addlegendentry{NOMA (2, 1)}
	\addplot[color=red,dashed, line width=0.5pt, mark size=1.5pt, mark=o, mark options={solid, red}] table [y=OMA_2_1, x=I_dB,col sep = comma]{Data_Rate_vs_IdB_SS.csv};
	\addlegendentry{OMA (2, 1)}
	\addplot[line width=0.5pt,solid,color=red] table [y=NOMA_2_1, x=I_dB,col sep = comma]{Data_Rate_vs_IdB_SS.csv};
	\addplot[line width=0.5pt,solid,color=green] table [y=NOMA_2_4, x=I_dB,col sep = comma]{Data_Rate_vs_IdB_SS.csv};
	\addplot[line width=0.5pt,solid,color=violet] table [y=NOMA_4_1, x=I_dB,col sep = comma]{Data_Rate_vs_IdB_SS.csv};
	\addplot[line width=0.5pt,solid,color=cyan] table [y=NOMA_4_4, x=I_dB,col sep = comma]{Data_Rate_vs_IdB_SS.csv};
	\end{axis}
\end{tikzpicture}
}	
  \vskip-0.2in
  \caption{Case S-S}
    \label{fig:1d}
\end{subfigure}
\caption{A comparison of the variation in the sum ER w.r.t. the peak interference power $I$ for the NOMA and OMA systems at $\theta = 1$.}
  \label{fig:1}
\end{figure*}
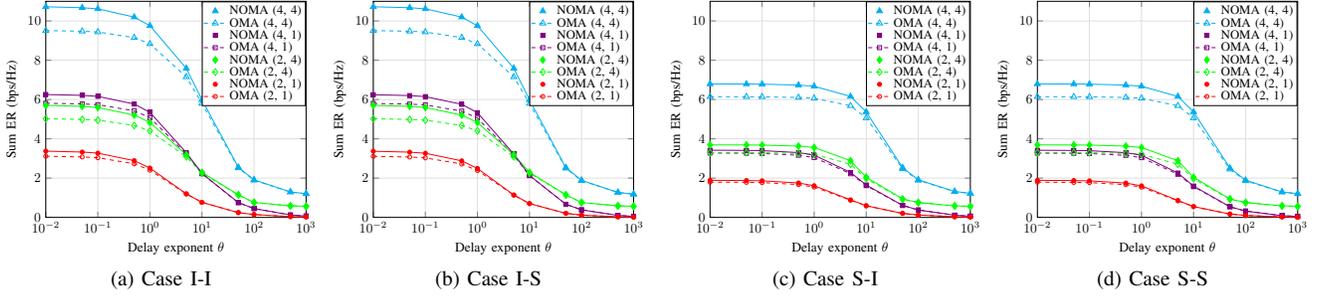
\begin{figure*}
\centering
\begin{subfigure}{.24\textwidth}
  \centering
  %% II Case
\pgfplotsset{compat=1.9}
\resizebox{\columnwidth}{!}{
\begin{tikzpicture}   
      \begin{axis}[
      	xmode=log,
	log basis x={10},
      	xlabel={Delay exponent $\theta$},
	xtick={0.01, 0.1, 1, 10, 100, 1000},
	xmin=0.01,xmax=1000,
	ylabel={Sum ER (bps/Hz)},
	ymin=0,ymax=11,
	grid=both,
	minor grid style={gray!25},
	major grid style={gray!25},
	legend columns=1, 
legend style={{nodes={scale=0.95, transform shape}}, at={(0.6,0.516)},  anchor=south west, draw=black,fill=white,legend cell align=left,inner sep=1pt,row sep = -2.5pt},]
	\addplot[mark=triangle*, only marks, mark size=2.5pt, color=cyan] table [y=NOMA_4_4, x=theta,col sep = comma]{Data_Rate_vs_theta_II.csv};
	\addlegendentry{NOMA (4, 4)}
	\addplot[color=cyan,dashed, line width=0.5pt, mark size=2.5pt, mark=triangle, mark options={solid, cyan}] table [y=OMA_4_4, x=theta,col sep = comma]{Data_Rate_vs_theta_II.csv};
	\addlegendentry{OMA (4, 4)}
	\addplot[mark=square*, only marks, mark size=1.5pt, color=violet] table [y=NOMA_4_1, x=theta,col sep = comma]{Data_Rate_vs_theta_II.csv};
	\addlegendentry{NOMA (4, 1)}
	\addplot[color=violet,dashed, line width=0.5pt, mark size=1.5pt, mark=square, mark options={solid, violet}] table [y=OMA_4_1, x=theta,col sep = comma]{Data_Rate_vs_theta_II.csv};
	\addlegendentry{OMA (4, 1)}
	\addplot[mark=diamond*, only marks, mark size=2.5pt, color=green] table [y=NOMA_2_4, x=theta,col sep = comma]{Data_Rate_vs_theta_II.csv};
	\addlegendentry{NOMA (2, 4)}
	\addplot[color=green,dashed, line width=0.5pt, mark size=2.5pt, mark=diamond, mark options={solid, green}] table [y=OMA_2_4, x=theta,col sep = comma]{Data_Rate_vs_theta_II.csv};
	\addlegendentry{OMA (2, 4)}
	\addplot[mark=*, only marks, mark size=1.5pt, color=red] table [y=NOMA_2_1, x=theta,col sep = comma]{Data_Rate_vs_theta_II.csv};
	\addlegendentry{NOMA (2, 1)}
	\addplot[color=red,dashed, line width=0.5pt, mark size=1.5pt, mark=o, mark options={solid, red}] table [y=OMA_2_1, x=theta,col sep = comma]{Data_Rate_vs_theta_II.csv};
	\addlegendentry{OMA (2, 1)}
	\addplot[line width=0.5pt,solid,color=red] table [y=NOMA_2_1, x=theta,col sep = comma]{Data_Rate_vs_theta_II.csv};
	\addplot[line width=0.5pt,solid,color=green] table [y=NOMA_2_4, x=theta,col sep = comma]{Data_Rate_vs_theta_II.csv};
	\addplot[line width=0.5pt,solid,color=violet] table [y=NOMA_4_1, x=theta,col sep = comma]{Data_Rate_vs_theta_II.csv};
	\addplot[line width=0.5pt,solid,color=cyan] table [y=NOMA_4_4, x=theta,col sep = comma]{Data_Rate_vs_theta_II.csv};
	\end{axis}
\end{tikzpicture}
}
  \vskip-0.2in
  \caption{Case I-I}
  \label{fig:2a}
\end{subfigure}%
\begin{subfigure}{.24\textwidth}
  \centering
  %% II Case
\pgfplotsset{compat=1.9}
\resizebox{\columnwidth}{!}{
\begin{tikzpicture}   
      \begin{axis}[
      	xmode=log,
	log basis x={10},
      	xlabel={Delay exponent $\theta$},
	xtick={0.01, 0.1, 1, 10, 100, 1000},
	xmin=0.01,xmax=1000,
	ylabel={Sum ER (bps/Hz)},
	ymin=0,ymax=11,
	grid=both,
	minor grid style={gray!25},
	major grid style={gray!25},
	legend columns=1, 
legend style={{nodes={scale=0.95, transform shape}}, at={(0.6,0.516)},  anchor=south west, draw=black,fill=white,legend cell align=left,inner sep=1pt,row sep = -2.5pt}]
	\addplot[mark=triangle*, only marks, mark size=2.5pt, color=cyan] table [y=NOMA_4_4, x=theta,col sep = comma]{Data_Rate_vs_theta_IS.csv};
	\addlegendentry{NOMA (4, 4)}
	\addplot[color=cyan,dashed, line width=0.5pt, mark size=2.5pt, mark=triangle, mark options={solid, cyan}] table [y=OMA_4_4, x=theta,col sep = comma]{Data_Rate_vs_theta_IS.csv};
	\addlegendentry{OMA (4, 4)}
	\addplot[mark=square*, only marks, mark size=1.5pt, color=violet] table [y=NOMA_4_1, x=theta,col sep = comma]{Data_Rate_vs_theta_IS.csv};
	\addlegendentry{NOMA (4, 1)}
	\addplot[color=violet,dashed, line width=0.5pt, mark size=1.5pt, mark=square, mark options={solid, violet}] table [y=OMA_4_1, x=theta,col sep = comma]{Data_Rate_vs_theta_IS.csv};
	\addlegendentry{OMA (4, 1)}
	\addplot[mark=diamond*, only marks, mark size=2.5pt, color=green] table [y=NOMA_2_4, x=theta,col sep = comma]{Data_Rate_vs_theta_IS.csv};
	\addlegendentry{NOMA (2, 4)}
	\addplot[color=green,dashed, line width=0.5pt, mark size=2.5pt, mark=diamond, mark options={solid, green}] table [y=OMA_2_4, x=theta,col sep = comma]{Data_Rate_vs_theta_IS.csv};
	\addlegendentry{OMA (2, 4)}
	\addplot[mark=*, only marks, mark size=1.5pt, color=red] table [y=NOMA_2_1, x=theta,col sep = comma]{Data_Rate_vs_theta_IS.csv};
	\addlegendentry{NOMA (2, 1)}
	\addplot[color=red,dashed, line width=0.5pt, mark size=1.5pt, mark=o, mark options={solid, red}] table [y=OMA_2_1, x=theta,col sep = comma]{Data_Rate_vs_theta_IS.csv};
	\addlegendentry{OMA (2, 1)}
	\addplot[line width=0.5pt,solid,color=red] table [y=NOMA_2_1, x=theta,col sep = comma]{Data_Rate_vs_theta_IS.csv};
	\addplot[line width=0.5pt,solid,color=green] table [y=NOMA_2_4, x=theta,col sep = comma]{Data_Rate_vs_theta_IS.csv};
	\addplot[line width=0.5pt,solid,color=violet] table [y=NOMA_4_1, x=theta,col sep = comma]{Data_Rate_vs_theta_IS.csv};
	\addplot[line width=0.5pt,solid,color=cyan] table [y=NOMA_4_4, x=theta,col sep = comma]{Data_Rate_vs_theta_IS.csv};
	\end{axis}
\end{tikzpicture}
}
  \vskip-0.2in
  \caption{Case I-S}
  \label{fig:2b}
\end{subfigure}
\begin{subfigure}{.24\textwidth}
  \centering
  %% II Case
\pgfplotsset{compat=1.9}
\resizebox{\columnwidth}{!}{
\begin{tikzpicture}   
      \begin{axis}[
      	xmode=log,
	log basis x={10},
      	xlabel={Delay exponent $\theta$},
	xtick={0.01, 0.1, 1, 10, 100, 1000},
	xmin=0.01,xmax=1000,
	ylabel={Sum ER (bps/Hz)},
	ymin=0,ymax=11,
	grid=both,
	minor grid style={gray!25},
	major grid style={gray!25},
	legend columns=1, 
legend style={{nodes={scale=0.95, transform shape}}, at={(0.6,0.516)},  anchor=south west, draw=black,fill=white,legend cell align=left,inner sep=1pt,row sep = -2.5pt}]
	\addplot[mark=triangle*, only marks, mark size=2.5pt, color=cyan] table [y=NOMA_4_4, x=theta,col sep = comma]{Data_Rate_vs_theta_SI.csv};
	\addlegendentry{NOMA (4, 4)}
	\addplot[color=cyan,dashed, line width=0.5pt, mark size=2.5pt, mark=triangle, mark options={solid, cyan}] table [y=OMA_4_4, x=theta,col sep = comma]{Data_Rate_vs_theta_SI.csv};
	\addlegendentry{OMA (4, 4)}
	\addplot[mark=square*, only marks, mark size=1.5pt, color=violet] table [y=NOMA_4_1, x=theta,col sep = comma]{Data_Rate_vs_theta_SI.csv};
	\addlegendentry{NOMA (4, 1)}
	\addplot[color=violet,dashed, line width=0.5pt, mark size=1.5pt, mark=square, mark options={solid, violet}] table [y=OMA_4_1, x=theta,col sep = comma]{Data_Rate_vs_theta_SI.csv};
	\addlegendentry{OMA (4, 1)}
	\addplot[mark=diamond*, only marks, mark size=2.5pt, color=green] table [y=NOMA_2_4, x=theta,col sep = comma]{Data_Rate_vs_theta_SI.csv};
	\addlegendentry{NOMA (2, 4)}
	\addplot[color=green,dashed, line width=0.5pt, mark size=2.5pt, mark=diamond, mark options={solid, green}] table [y=OMA_2_4, x=theta,col sep = comma]{Data_Rate_vs_theta_SI.csv};
	\addlegendentry{OMA (2, 4)}
	\addplot[mark=*, only marks, mark size=1.5pt, color=red] table [y=NOMA_2_1, x=theta,col sep = comma]{Data_Rate_vs_theta_SI.csv};
	\addlegendentry{NOMA (2, 1)}
	\addplot[color=red,dashed, line width=0.5pt, mark size=1.5pt, mark=o, mark options={solid, red}] table [y=OMA_2_1, x=theta,col sep = comma]{Data_Rate_vs_theta_SI.csv};
	\addlegendentry{OMA (2, 1)}
	\addplot[line width=0.5pt,solid,color=red] table [y=NOMA_2_1, x=theta,col sep = comma]{Data_Rate_vs_theta_SI.csv};
	\addplot[line width=0.5pt,solid,color=green] table [y=NOMA_2_4, x=theta,col sep = comma]{Data_Rate_vs_theta_SI.csv};
	\addplot[line width=0.5pt,solid,color=violet] table [y=NOMA_4_1, x=theta,col sep = comma]{Data_Rate_vs_theta_SI.csv};
	\addplot[line width=0.5pt,solid,color=cyan] table [y=NOMA_4_4, x=theta,col sep = comma]{Data_Rate_vs_theta_SI.csv};
	\end{axis}
\end{tikzpicture}
}
  \vskip-0.2in
  \caption{Case S-I}
  \label{fig:2c}
\end{subfigure}%
\begin{subfigure}{.24\textwidth}
  \centering
  %% II Case
\pgfplotsset{compat=1.9}
\resizebox{\columnwidth}{!}{
\begin{tikzpicture}   
      \begin{axis}[
      	xmode=log,
	log basis x={10},
      	xlabel={Delay exponent $\theta$},
	xtick={0.01, 0.1, 1, 10, 100, 1000},
	xmin=0.01,xmax=1000,
	ylabel={Sum ER (bps/Hz)},
	ymin=0,ymax=11,
	grid=both,
	minor grid style={gray!25},
	major grid style={gray!25},
	legend columns=1, 
legend style={{nodes={scale=0.95, transform shape}}, at={(0.6,0.516)},  anchor=south west, draw=black,fill=white,legend cell align=left,inner sep=1pt,row sep = -2.5pt}]
	\addplot[mark=triangle*, only marks, mark size=2.5pt, color=cyan] table [y=NOMA_4_4, x=theta,col sep = comma]{Data_Rate_vs_theta_SS.csv};
	\addlegendentry{NOMA (4, 4)}
	\addplot[color=cyan,dashed, line width=0.5pt, mark size=2.5pt, mark=triangle, mark options={solid, cyan}] table [y=OMA_4_4, x=theta,col sep = comma]{Data_Rate_vs_theta_SS.csv};
	\addlegendentry{OMA (4, 4)}
	\addplot[mark=square*, only marks, mark size=1.5pt, color=violet] table [y=NOMA_4_1, x=theta,col sep = comma]{Data_Rate_vs_theta_SS.csv};
	\addlegendentry{NOMA (4, 1)}
	\addplot[color=violet,dashed, line width=0.5pt, mark size=1.5pt, mark=square, mark options={solid, violet}] table [y=OMA_4_1, x=theta,col sep = comma]{Data_Rate_vs_theta_SS.csv};
	\addlegendentry{OMA (4, 1)}
	\addplot[mark=diamond*, only marks, mark size=2.5pt, color=green] table [y=NOMA_2_4, x=theta,col sep = comma]{Data_Rate_vs_theta_SS.csv};
	\addlegendentry{NOMA (2, 4)}
	\addplot[color=green,dashed, line width=0.5pt, mark size=2.5pt, mark=diamond, mark options={solid, green}] table [y=OMA_2_4, x=theta,col sep = comma]{Data_Rate_vs_theta_SS.csv};
	\addlegendentry{OMA (2, 4)}
	\addplot[mark=*, only marks, mark size=1.5pt, color=red] table [y=NOMA_2_1, x=theta,col sep = comma]{Data_Rate_vs_theta_SS.csv};
	\addlegendentry{NOMA (2, 1)}
	\addplot[color=red,dashed, line width=0.5pt, mark size=1.5pt, mark=o, mark options={solid, red}] table [y=OMA_2_1, x=theta,col sep = comma]{Data_Rate_vs_theta_SS.csv};
	\addlegendentry{OMA (2, 1)}
	\addplot[line width=0.5pt,solid,color=red] table [y=NOMA_2_1, x=theta,col sep = comma]{Data_Rate_vs_theta_SS.csv};
	\addplot[line width=0.5pt,solid,color=green] table [y=NOMA_2_4, x=theta,col sep = comma]{Data_Rate_vs_theta_SS.csv};
	\addplot[line width=0.5pt,solid,color=violet] table [y=NOMA_4_1, x=theta,col sep = comma]{Data_Rate_vs_theta_SS.csv};
	\addplot[line width=0.5pt,solid,color=cyan] table [y=NOMA_4_4, x=theta,col sep = comma]{Data_Rate_vs_theta_SS.csv};
	\end{axis}
\end{tikzpicture}
}
  \vskip-0.2in
  \caption{Case S-S}
  \label{fig:2d}
\end{subfigure}
\caption{A comparison of the variation in the sum ER w.r.t. the delay QoS exponent $\theta$ for the NOMA and OMA systems at $I = 0$ dB.}
\label{fig:2}
\end{figure*}
\section{Case S--S: Statistical IL-CSI and Statistical SL-CSI}  \label{sec:S-S}
In this case, we consider the scenario where the ST has statistical CSI for the ST-PR link as well as for the ST-$\mathrm{U_n}$ and ST-$\mathrm{U_f}$ links. Therefore, similar to the case in~Section~\ref{sec:I-S}, the near user $\mathrm{U_n}$ is considered as the strong user $\mathrm{U_s},$ whereas the far user $\mathrm{U_f}$ is considered as the weak user $\mathrm{U_w}$. However, similar to the case in~Section~\ref{sec:S-I}, the power transmitted from the ST is given by $P_{\mathrm t}^* = -I/(\Omega_{\mathrm p} \ln \delta)$. Moreover, we have $\gamma_{\mathrm s} = a_{\mathrm s}\hat P_{\mathrm t}^* g_{\mathrm n}$ and $\gamma_{\mathrm w} = a_{\mathrm w} \hat P_{\mathrm t}^* g_{\mathrm w}/(1 + a_{\mathrm s} \hat P_{\mathrm t}^* g_{\mathrm w})$. The ER for the strong user is given by
\begin{align}
	& R_{\mathrm s} = \dfrac{-1}{\nu} \log_2 \left[ \int_0^\infty (1 + a_{\mathrm s}  \hat P_{\mathrm t}^* x)^{-\nu} f_{g_{\mathrm n}}(x) \mathrm dx \right] \notag \\
	% = & \ \dfrac{-1}{\nu} \log_2 \left[ \int_0^\infty (1 + a_{\mathrm s}  \hat P_{\mathrm t}^* x)^{-\nu} \dfrac{x^{N_{\mathrm n} - 1}}{\Gamma(N_{\mathrm n}) \Omega_{\mathrm n}^{N_{\mathrm n}}} \exp \left( \dfrac{-x}{\Omega_{\mathrm n}}\right) \mathrm dx \right] \notag \\
	= & \ \dfrac{-1}{\nu} \log_2 \left[ \dfrac{1}{\Gamma(N_{\mathrm n}) \Gamma(\nu)} G_{2, 1}^{1, 2} \left( a_{\mathrm s}  \hat P_{\mathrm t}^* \Omega_{\mathrm n} \left\vert \begin{smallmatrix} 1-\nu, 1 - N_{\mathrm n} \\[0.6em] 0\end{smallmatrix} \right. \right) \right]. \label{Eqn:CaseSS:RsClosed}
\end{align}
\noindent The integral above is solved using the expression for $f_{g_{u'}}(x)$ and~\cite[eqns.~(7),~(10),~(11),~(21),~and~(22)]{Reduce}. On the other hand, the analytical expression for the ER of the weak user will be the same as given by~\eqref{Eqn:CaseSI:RwClosed}. 
%However, in the previous section, the weak user is chosen dynamically depending on the instantaneous SL-CSI, whereas, in this section, the near user is considered as the weak user. 
The sum ER, $R_{\mathrm{sum}} = R_{\mathrm s} + R_{\mathrm w}$, is then obtained by combining~\eqref{Eqn:CaseSS:RsClosed} and~\eqref{Eqn:CaseSI:RwClosed}.
%===========RESULTS=====================
\section{Results and Discussion}\label{sec:Results}
In this section, we present numerical and analytical results to investigate the impact of different parameters of interest on the sum ER performance of the underlay NOMA system. The path loss for each of the radio links is modeled as $\mathrm{PL_{ref}} - 10 \xi \log_{10}(d/d_{\mathrm{ref}})$ dB, where $\mathrm{PL_{ref}}$ is the path loss at the reference distance $d_{\mathrm{ref}}$, $d$ denotes the distance between the transmitter and the receiver, and $\xi$ is the path loss exponent. Here we consider $\mathrm{PL_{ref}} = -30$ dB, $d_{\mathrm{ref}} = 1$ m, and the path loss exponent for all of the links is set to $\xi = 2.5$. The PR is assumed to be located at a distance of 40~m from the ST, whereas for the case of secondary users we consider two different scenarios, i.e., $K = 2$ and $K = 4$. For the two-user case, it is assumed that the users are located at 10~m and 50~m from the ST. On the other hand, for the four-user case, the users are considered to be located at 10~m, 12~m, 50~m and 60~m from the ST, respectively. For all cases, we assume $N_k = N$ for all $k \in \{1, 2, \ldots, K\}$; however, the derived analytical expressions hold good for the general case where the secondary users have different numbers of receive antennas. Moreover, we consider the normalized values $B = 1$ Hz, $T = 1$s, $\mathscr N_0 = 1$ W/Hz, and for the case where only statistical SL-CSI is considered to be available at ST, we use $\delta = 0.1$. Note that in all figures, the legends (shown in the figure inset) represent the numerically evaluated results, whereas the solid lines represent the results plotted using the derived analytical expressions for the NOMA systems. Also, in all of the figure legends, the numbers shown in the parentheses denote $(K, N)$.

For the purpose of comparison, we consider a time-frequency-division orthogonal multiple access (OMA) system where the user-pairing is performed in exactly the same fashion as for the NOMA system. Also, for a fair comparison, we consider the same peak power constraint for the OMA system as for the NOMA system. Moreover, we assume equal bandwidth and power allocation among all of the user-pairs. Therefore, in the first time slot, the OMA system serves the strong user in each of the user-pairs simultaneously using orthogonal frequencies, and then in the second time slot, all of the weak users are served simultaneously, but on different frequencies.

It is well-known that for the case of NOMA, the weak/far users normally present a performance bottleneck due to the poor channel conditions and the inter-user interference at such users. In order to overcome this issue, we use a bisection-search-based power allocation scheme for the case of NOMA such that the ER of the strong user in each user-pair is equal to the ER of the strong user in the corresponding OMA system, and then the remaining power is allocated equally among the weak users of each user-pair. This scheme guarantees that the ERs of the strong users in both NOMA and OMA systems are the same, while (as it turns out) the ER of the weak users in the NOMA system is larger than (or approximately equal to) that of the corresponding OMA system, resulting in a higher sum ER in the NOMA system.

Fig.~\ref{fig:1} shows a comparison of the sum ER for NOMA and OMA systems for different cases of CSI availability at the ST. It is evident from the figure that in the low-$I$ regime, the sum ER of both NOMA and OMA systems are almost equal. However, as the value of the peak tolerable interference power (i.e., $I$) increases, the NOMA-based system results in a higher sum ER compared to its OMA-based counterpart. It is also clear from the figure that increasing the number of antennas at the users and/or increasing the number of users results in a larger difference between the sum ER of the NOMA and OMA systems for mid-to-large values of $I$.

In Fig.~\ref{fig:2}, we show the impact of the delay exponent $\theta$ on the sum ER of the NOMA-based and corresponding OMA-based systems for different levels of CSI at the ST. It is interesting to note from the figure that as the value of $\theta$ increases, the sum ER for both NOMA and OMA systems decreases significantly. Note that a large value of $\theta$ represents a highly delay-sensitive system, and for such systems, both the NOMA and OMA systems show similar performance. However, the benefit of a large number of antennas at the secondary users is clearly evident from the figure. On the other hand, $\theta \to 0$ corresponds to a system without any delay constraints, and in this case, the ER becomes equal to the average achievable rate. The figure demonstrates that for low-to-mid values of $\theta$, the NOMA-based system achieves a significantly higher sum ER compared to its OMA-based counterpart.

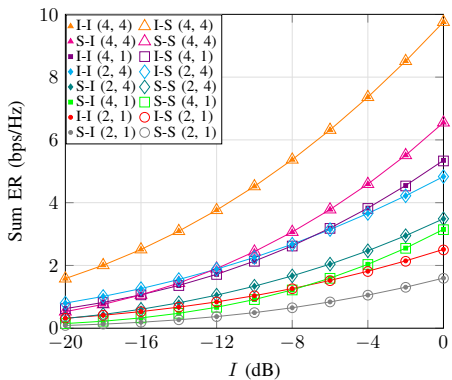
\begin{figure}
\centering
\pgfplotsset{compat=1.9}
\resizebox{0.7\columnwidth}{!}{
\begin{tikzpicture}   
      \begin{axis}[
      	xlabel={$I$ (dB)},
	xtick={-20, -16, -12, -8, -4, 0},
	xmin=-20,xmax=0,
	ylabel={Sum ER (bps/Hz)},
	ymin=0,ymax=10,
	grid=both,
	minor grid style={gray!25},
	major grid style={gray!25},
	legend columns=2, 
legend style={{nodes={scale=0.8, transform shape}}, at={(0.001,0.586)},  anchor=south west, draw=black,fill=white,legend cell align=left,inner sep=1pt,row sep = -2pt},]
	%%========== 4-users, 4-antennas==============
	\addplot[mark=triangle*, only marks, mark size=1.5pt, color=orange] table [y=NOMA_4_4, x=I_dB,col sep = comma]{Data_Rate_vs_IdB_II.csv};
	\addlegendentry{I-I (4, 4)}
	\addplot[mark=triangle, only marks, mark size=3.5pt, color=orange] table [y=NOMA_4_4, x=I_dB,col sep = comma]{Data_Rate_vs_IdB_IS.csv};
	\addlegendentry{I-S (4, 4)}
	\addplot[mark=triangle*, only marks, mark size=1.5pt, color=magenta] table [y=NOMA_4_4, x=I_dB,col sep = comma]{Data_Rate_vs_IdB_SI.csv};
	\addlegendentry{S-I (4, 4)}
	\addplot[mark=triangle, only marks, mark size=3.5pt, color=magenta] table [y=NOMA_4_4, x=I_dB,col sep = comma]{Data_Rate_vs_IdB_SS.csv};
	\addlegendentry{S-S (4, 4)}
	\addplot[mark=square*, only marks, mark size=1pt, color=violet] table [y=NOMA_4_1, x=I_dB,col sep = comma]{Data_Rate_vs_IdB_II.csv};
	\addlegendentry{I-I (4, 1)}
	\addplot[mark=square, only marks, mark size=2.5pt, color=violet] table [y=NOMA_4_1, x=I_dB,col sep = comma]{Data_Rate_vs_IdB_IS.csv};
	\addlegendentry{I-S (4, 1)}
	\addplot[mark=diamond*, only marks, mark size=1.5pt, color=cyan] table [y=NOMA_2_4, x=I_dB,col sep = comma]{Data_Rate_vs_IdB_II.csv};
	\addlegendentry{I-I (2, 4)}
	\addplot[mark=diamond, only marks, mark size=3.5pt, color=cyan] table [y=NOMA_2_4, x=I_dB,col sep = comma]{Data_Rate_vs_IdB_IS.csv};
	\addlegendentry{I-S (2, 4)}
	\addplot[mark=diamond*, only marks, mark size=1.5pt, color=teal] table [y=NOMA_2_4, x=I_dB,col sep = comma]{Data_Rate_vs_IdB_SI.csv};
	\addlegendentry{S-I (2, 4)}
	\addplot[mark=diamond, only marks, mark size=3.5pt, color=teal] table [y=NOMA_2_4, x=I_dB,col sep = comma]{Data_Rate_vs_IdB_SS.csv};
	\addlegendentry{S-S (2, 4)}
	\addplot[mark=square*, only marks, mark size=1pt, color=green] table [y=NOMA_4_1, x=I_dB,col sep = comma]{Data_Rate_vs_IdB_SI.csv};
	\addlegendentry{S-I (4, 1)}
	\addplot[mark=square, only marks, mark size=2.5pt, color=green] table [y=NOMA_4_1, x=I_dB,col sep = comma]{Data_Rate_vs_IdB_SS.csv};
	\addlegendentry{S-S (4, 1)}
	\addplot[mark=*, only marks, mark size=1pt, color=red] table [y=NOMA_2_1, x=I_dB,col sep = comma]{Data_Rate_vs_IdB_II.csv};
	\addlegendentry{I-I (2, 1)}
	\addplot[mark=o, only marks, mark size=2.5pt, color=red] table [y=NOMA_2_1, x=I_dB,col sep = comma]{Data_Rate_vs_IdB_IS.csv};
	\addlegendentry{I-S (2, 1)}
	\addplot[mark=*, only marks, mark size=1pt, color=gray] table [y=NOMA_2_1, x=I_dB,col sep = comma]{Data_Rate_vs_IdB_SI.csv};
	\addlegendentry{S-I (2, 1)}
	\addplot[mark=o, only marks, mark size=2.5pt, color=gray] table [y=NOMA_2_1, x=I_dB,col sep = comma]{Data_Rate_vs_IdB_SS.csv};
	\addlegendentry{S-S (2, 1)}
	\addplot[line width=0.5pt,solid,color=red] table [y=NOMA_2_1, x=I_dB,col sep = comma]{Data_Rate_vs_IdB_II.csv};
	\addplot[line width=0.5pt,solid,color=cyan] table [y=NOMA_2_4, x=I_dB,col sep = comma]{Data_Rate_vs_IdB_II.csv};
	\addplot[line width=0.5pt,solid,color=violet] table [y=NOMA_4_1, x=I_dB,col sep = comma]{Data_Rate_vs_IdB_II.csv};
	\addplot[line width=0.5pt,solid,color=orange] table [y=NOMA_4_4, x=I_dB,col sep = comma]{Data_Rate_vs_IdB_II.csv};
	\addplot[line width=0.5pt,solid,color=gray] table [y=NOMA_2_1, x=I_dB,col sep = comma]{Data_Rate_vs_IdB_SI.csv};
	\addplot[line width=0.5pt,solid,color=teal] table [y=NOMA_2_4, x=I_dB,col sep = comma]{Data_Rate_vs_IdB_SI.csv};
	\addplot[line width=0.5pt,solid,color=green] table [y=NOMA_4_1, x=I_dB,col sep = comma]{Data_Rate_vs_IdB_SI.csv};
	\addplot[line width=0.5pt,solid,color=magenta] table [y=NOMA_4_4, x=I_dB,col sep = comma]{Data_Rate_vs_IdB_SI.csv};
	\end{axis}
\end{tikzpicture}
}
\caption{Comparison of the sum ER for the NOMA system with different levels of CSI at $\theta = 1$.}
\label{fig:3}
\end{figure}
\begin{figure}
\centering
\pgfplotsset{compat=1.9}
\resizebox{0.7\columnwidth}{!}{
\begin{tikzpicture}   
      \begin{axis}[
      	xmode=log,
	log basis x={10},
      	xlabel={Delay exponent $\theta$},
	xtick={0.01, 0.1, 1, 10, 100, 1000},
	xmin=0.01,xmax=1000,
	ylabel={Sum ER (bps/Hz)},
	ymin=0,ymax=11,
	grid=both,
	minor grid style={gray!25},
	major grid style={gray!25},
	legend columns=2, 
legend style={{nodes={scale=0.8, transform shape}}, at={(0.58,0.586)},  anchor=south west, draw=black,fill=white,legend cell align=left,inner sep=1pt,row sep = -2pt},]
	%%========== 4-users, 4-antennas==============
	\addplot[mark=triangle*, only marks, mark size=1.5pt, color=orange] table [y=NOMA_4_4, x=theta,col sep = comma]{Data_Rate_vs_theta_II.csv};
	\addlegendentry{I-I (4, 4)}
	\addplot[mark=triangle, only marks, mark size=3.5pt, color=orange] table [y=NOMA_4_4, x=theta,col sep = comma]{Data_Rate_vs_theta_IS.csv};
	\addlegendentry{I-S (4, 4)}
	\addplot[mark=triangle*, only marks, mark size=1.5pt, color=magenta] table [y=NOMA_4_4, x=theta,col sep = comma]{Data_Rate_vs_theta_SI.csv};
	\addlegendentry{S-I (4, 4)}
	\addplot[mark=triangle, only marks, mark size=3.5pt, color=magenta] table [y=NOMA_4_4, x=theta,col sep = comma]{Data_Rate_vs_theta_SS.csv};
	\addlegendentry{S-S (4, 4)}
	\addplot[mark=square*, only marks, mark size=1pt, color=violet] table [y=NOMA_4_1, x=theta,col sep = comma]{Data_Rate_vs_theta_II.csv};
	\addlegendentry{I-I (4, 1)}
	\addplot[mark=square, only marks, mark size=2.5pt, color=violet] table [y=NOMA_4_1, x=theta,col sep = comma]{Data_Rate_vs_theta_IS.csv};
	\addlegendentry{I-S (4, 1)}
	\addplot[mark=diamond*, only marks, mark size=1.5pt, color=cyan] table [y=NOMA_2_4, x=theta,col sep = comma]{Data_Rate_vs_theta_II.csv};
	\addlegendentry{I-I (2, 4)}
	\addplot[mark=diamond, only marks, mark size=3.5pt, color=cyan] table [y=NOMA_2_4, x=theta,col sep = comma]{Data_Rate_vs_theta_IS.csv};
	\addlegendentry{I-S (2, 4)}
	\addplot[mark=diamond*, only marks, mark size=1.5pt, color=teal] table [y=NOMA_2_4, x=theta,col sep = comma]{Data_Rate_vs_theta_SI.csv};
	\addlegendentry{S-I (2, 4)}
	\addplot[mark=diamond, only marks, mark size=3.5pt, color=teal] table [y=NOMA_2_4, x=theta,col sep = comma]{Data_Rate_vs_theta_SS.csv};
	\addlegendentry{S-S (2, 4)}
	\addplot[mark=square*, only marks, mark size=1pt, color=green] table [y=NOMA_4_1, x=theta,col sep = comma]{Data_Rate_vs_theta_SI.csv};
	\addlegendentry{S-I (4, 1)}
	\addplot[mark=square, only marks, mark size=2.5pt, color=green] table [y=NOMA_4_1, x=theta,col sep = comma]{Data_Rate_vs_theta_SS.csv};
	\addlegendentry{S-S (4, 1)}
	\addplot[mark=*, only marks, mark size=1pt, color=red] table [y=NOMA_2_1, x=theta,col sep = comma]{Data_Rate_vs_theta_II.csv};
	\addlegendentry{I-I (2, 1)}
	\addplot[mark=o, only marks, mark size=2.5pt, color=red] table [y=NOMA_2_1, x=theta,col sep = comma]{Data_Rate_vs_theta_IS.csv};
	\addlegendentry{I-S (2, 1)}
	\addplot[mark=*, only marks, mark size=1pt, color=gray] table [y=NOMA_2_1, x=theta,col sep = comma]{Data_Rate_vs_theta_SI.csv};
	\addlegendentry{S-I (2, 1)}
	\addplot[mark=o, only marks, mark size=2.5pt, color=gray] table [y=NOMA_2_1, x=theta,col sep = comma]{Data_Rate_vs_theta_SS.csv};
	\addlegendentry{S-S (2, 1)}
	\addplot[line width=0.5pt,solid,color=red] table [y=NOMA_2_1, x=theta,col sep = comma]{Data_Rate_vs_theta_II.csv};
	\addplot[line width=0.5pt,solid,color=cyan] table [y=NOMA_2_4, x=theta,col sep = comma]{Data_Rate_vs_theta_II.csv};
	\addplot[line width=0.5pt,solid,color=violet] table [y=NOMA_4_1, x=theta,col sep = comma]{Data_Rate_vs_theta_II.csv};
	\addplot[line width=0.5pt,solid,color=orange] table [y=NOMA_4_4, x=theta,col sep = comma]{Data_Rate_vs_theta_II.csv};
	\addplot[line width=0.5pt,solid,color=gray] table [y=NOMA_2_1, x=theta,col sep = comma]{Data_Rate_vs_theta_SI.csv};
	\addplot[line width=0.5pt,solid,color=teal] table [y=NOMA_2_4, x=theta,col sep = comma]{Data_Rate_vs_theta_SI.csv};
	\addplot[line width=0.5pt,solid,color=green] table [y=NOMA_4_1, x=theta,col sep = comma]{Data_Rate_vs_theta_SI.csv};
	\addplot[line width=0.5pt,solid,color=magenta] table [y=NOMA_4_4, x=theta,col sep = comma]{Data_Rate_vs_theta_SI.csv};
	\end{axis}
\end{tikzpicture}
}
\caption{Comparison of the sum ER for the NOMA system with different levels of CSI at $I = 0$ dB.}
\label{fig:4}
\end{figure}
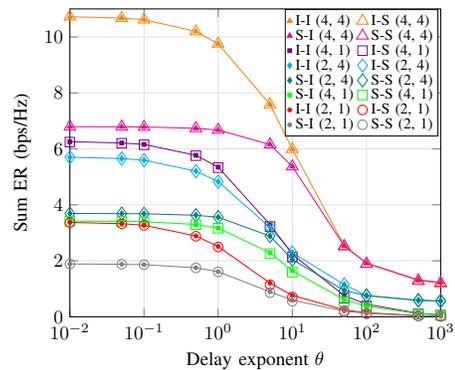

In Fig.~\ref{fig:3}, we compare the effect of different levels of CSI on the sum ER of the NOMA system. It is interesting to note from the figure that the performance of the I-I NOMA system is the same as that of the I-S NOMA system. This means that the NOMA system does not cause any performance degradation when the level of SL-CSI availability reduces from instantaneous to statistical. This occurs due to the fact that when there is a significant difference between the average link quality of the ST-$\mathrm{U_n}$ and ST-$\mathrm {U_f}$ links, the near user and the far user act as the strong user and the weak user, respectively, most of the time, even for the case when instantaneous SL-CSI is available at the ST. A similar trend can be observed while comparing the performance of the S-I NOMA and S-S NOMA systems. Nevertheless, it is also important to note that the system performance degrades significantly when the level of IL-CSI reduces from instantaneous to statistical. Note that for the case of instantaneous IL-CSI availability at the ST, it can transmit at considerably higher power without violating the PR's interference constraint when the ST-PR link is in a deep fading state. On the other hand, for the case when only the statistical IL-CSI is available at the ST, the ST transmits at a constant power, regardless of the instantaneous state of the ST-PR link; this results in a notably inferior performance.

We compare the effect of different levels of CSI at the ST on the variation of sum ER w.r.t. the delay QoS exponent $\theta$ for the NOMA system in~Fig.~\ref{fig:4}. It can be observed from the figure that due to similar reasons as those explained for~Fig.~\ref{fig:3}, the I-I NOMA system performs similar to the I-S NOMA system, and similarly, the performance of the S-I NOMA and S-S NOMA systems are the same, especially for low-to-mid-range values of $\theta$. However, for the case when the delay requirements of the system becomes extremely stringent, i.e., for very large values of $\theta$, all of the considered systems, i.e., I-I NOMA, I-S NOMA, S-I NOMA and S-S NOMA, result in similar sum ER performance. This result indicates that for such extremely delay-intolerant applications, a low-complexity S-S NOMA system provides the same level of performance as its more sophisticated counterparts.
%=======================================
\section{Conclusion}
In this paper, a comprehensive delay-constrained performance analysis of a multi-antenna-assisted multiuser NOMA-based underlay spectrum sharing system was provided considering a peak interference constraint. In particular, we adopted a low-complexity user-pairing scheme, and derived closed-form expressions for the sum ER for all of the cases of different levels of CSI availability at the ST. Our results confirm that for a very small value of the peak tolerable interference power and/or for a large value of the delay QoS exponent, the performance of the NOMA-based system is the same as that of its OMA-based counterpart. However, for a mid-to-large peak interference power and/or small-to-mid value of the delay exponent, the NOMA-based system yields significance performance improvement. On the other hand, our results also establish that when the link qualities of each of the paired users are significantly different, a decreased level of secondary-link CSI at the ST does not have any deleterious effect on the system performance. However, the system performance degrades notably for the case when only the statistical interference-link CSI is available at the ST compared to the case of instantaneous interference-link CSI availability.

%=======================================
\appendices
\section{Proof of Theorem~\ref{CaseII:Theorem_Rs}} \label{CaseII:Proof_Rs}
Given that all links are Rayleigh distributed, the PDF of $g_{\mathrm p}$ is given by $f_{g_{\mathrm p}}(x) = \exp\left( -x/\Omega_{\mathrm p}\right)/\Omega_{\mathrm p}$. On the other hand, the PDF and the CDF for the random variable $g_{u'} \triangleq \sum_{i = 1}^{N_{u'}} |h_{u', i}|^2, u' \in \{{\mathrm n}, {\mathrm f}\}$, are respectively given by $f_{g_{u'}} (x) = x^{N_{u'} - 1} \exp \left( -x / \Omega_{u'}\right) / [\Gamma(N_{u'}) \Omega_{u'}^{N_{u'}}]$ and $F_{g_{u'}}(x) = 1 - \exp \left( -x / \Omega_{u'} \right) \sum_{\tau = 0}^{N_{u'} - 1} x^{\tau}/[\Omega_{u'}^{\tau}\tau !]$. The PDF of the random variable $g_{\mathrm s} = \max \{g_{\mathrm n}, g_{\mathrm f}\}$ is given by  $f_{g_{\mathrm s}}(x) = f_{g_{\mathrm n}}(x) F_{g_{\mathrm f}}(x) + f_{g_{\mathrm f}}(x) F_{g_{\mathrm n}}(x) = x^{N_{\mathrm n} - 1} \exp \left( -x / \Omega_{\mathrm n} \right)/[\Gamma(N_{\mathrm n}) \Omega_{\mathrm n}^{N_{\mathrm n}}] - \big(1/[\Gamma(N_{\mathrm n}) \Omega_{\mathrm n}^{N_{\mathrm n}}] \big) \exp \left(-x/\Omega \right) \sum_{l = 0}^{N_{\mathrm f} - 1} x^{N_{\mathrm n} + l - 1}/[l! \Omega_{\mathrm f}^l]$ $+x^{N_{\mathrm f} - 1}\exp \left( -x / \Omega_{\mathrm f} \right)/[\Gamma(N_{\mathrm f}) \Omega_{\mathrm f}^{N_{\mathrm f}}] - \big(1/[\Gamma(N_{\mathrm f}) \Omega_{\mathrm f}^{N_{\mathrm f}}] \big) $ $\exp \left(-x/\Omega \right) \sum_{\ell = 0}^{N_{\mathrm n} - 1} x^{N_{\mathrm f} + \ell - 1}/[\ell! \Omega_{\mathrm n}^{\ell}]$. Now the PDF of the random variable $X_{\mathrm s} = g_{\mathrm s}/g_{\mathrm p}$ is given by $f_{X_{\mathrm s}}(x) = \int_{0}^\infty y f_{g_{\mathrm s}}(yx) f_{g_{\mathrm p}}(y) \mathrm dy =$ $\tfrac{\Gamma(N_{\mathrm n} + 1) x^{N_{\mathrm n} - 1} }{\Gamma(N_{\mathrm n}) \Omega_{\mathrm n}^{N_{\mathrm n}} \Omega_{\mathrm p}}\left( \!\tfrac{x}{\Omega_{\mathrm n}}\! +\! \tfrac{1}{\Omega_{\mathrm p}} \!\right)^{-(N_{\mathrm n} + 1)} - \sum_{l = 0}^{N_{\mathrm f} - 1} \tfrac{\Gamma(N_{\mathrm n} + l + 1) x^{N_{\mathrm n} + l - 1}}{\Gamma(N_{\mathrm n}) \Omega_{\mathrm n}^{N_{\mathrm n} \Omega_{\mathrm p} l!} \Omega_{\mathrm f}^l}$ $\left( \tfrac{x}{\Omega} + \tfrac{1}{\Omega_{\mathrm p}} \right)^{-(N_{\mathrm n} + l + 1)} +  \tfrac{\Gamma(N_{\mathrm f} + 1) x^{N_{\mathrm f} - 1} }{\Gamma(N_{\mathrm f}) \Omega_{\mathrm f}^{N_{\mathrm f}} \Omega_{\mathrm p}}$ $\left( \tfrac{x}{\Omega_{\mathrm f}} + \tfrac{1}{\Omega_{\mathrm p}} \right)^{-(N_{\mathrm f} + 1)} - \sum_{\ell = 0}^{N_{\mathrm n} - 1} \tfrac{\Gamma(N_{\mathrm f} + \ell + 1) x^{N_{\mathrm f} + l - 1}}{\Gamma(N_{\mathrm f}) \Omega_{\mathrm f}^{N_{\mathrm f}} \Omega_{\mathrm p} \ell! \Omega_{\mathrm n}^{\ell} } \left( \tfrac{x}{\Omega} + \tfrac{1}{\Omega_{\mathrm p}} \right)^{-(N_{\mathrm f} + \ell + 1)}$.
Define $T_1(\alpha, \beta) \triangleq \int_0^\infty (1 + a_{\mathrm s}  \hat I x)^{-\nu} \tfrac{\Gamma(\alpha + 1) x^{\alpha - 1} }{\Gamma(\alpha) \beta^{\alpha} \Omega_{\mathrm p}} \left( \tfrac{x}{\beta} + \tfrac{1}{\Omega_{\mathrm p}} \right)^{-(\alpha + 1)}$ \\$ \mathrm dx =  \tfrac{\left( \Omega_{\mathrm p}/\beta\right)^{\alpha}}{\Gamma(\alpha) \Gamma(\nu)} \int_0^\infty x^{\alpha - 1} G_{1, 1}^{1, 1} \left(\! a_{\mathrm s}  \hat I x \left\vert \begin{smallmatrix} 1 - \nu \\ 0 \end{smallmatrix} \right. \!\right) G_{1, 1}^{1, 1} \left(\! \tfrac{\Omega_{\mathrm p}}{\beta} x \left\vert \begin{smallmatrix} -\alpha \\ 0 \end{smallmatrix} \right. \!\right) \mathrm dx$ $= \tfrac{1}{\Gamma(\alpha) \Gamma(\nu) (a_{\mathrm s}  \hat I)^\alpha} \left( \tfrac{\Omega_{\mathrm p}}{\beta}\right)^\alpha G_{2, 2}^{2, 2} \left( \tfrac{\Omega_{\mathrm p}}{\beta a_{\mathrm s}  \hat I} \left\vert \begin{smallmatrix} -\alpha, & 1 - \alpha \\ 0, & \nu - \alpha \end{smallmatrix} \right. \right)$,~where~the integration above is solved using~\cite[eqns.~(7),~(10),~(21),~and~(22)]{Reduce}. Similarly, define
$T_2(\alpha_1 , \alpha_2, \beta_1, \beta_2) \triangleq \int_0^\infty (1 + a_{\mathrm s}  \hat I x)^{-\nu}$ $ \sum_{\tau = 0}^{\alpha_2 - 1} \tfrac{\Gamma(\alpha_1 + \tau + 1) x^{\alpha_1 + \tau - 1}}{\Gamma(\alpha_1) \beta_1^{\alpha_1} \Omega_{\mathrm p} \beta_2^\tau \tau!} \left( \tfrac{x}{\Omega} + \tfrac{1}{\Omega_{\mathrm p}} \right)^{-(\alpha_1 + \tau + 1)} \mathrm dx = \sum_{\tau = 0}^{\alpha_2 - 1} \tfrac{[\Omega_{\mathrm p}/(a_{\mathrm s}  \hat I)]^{\alpha_1 + \tau} }{\Gamma(\alpha_1) \Gamma(\nu) \beta_1^{\alpha_1} \beta_2^{\tau} \tau!} G_{2, 2}^{2, 2} \left( \tfrac{\Omega_p}{\Omega a_{\mathrm s}  \hat I} \left\vert \begin{smallmatrix} -\alpha_1 - \tau, & 1 - \alpha_1 - \tau \\ 0, & \nu - \alpha_1 - \tau \end{smallmatrix}\right. \right)$,
where the integral above is solved similar to that for~$T_1(\alpha, \beta)$. Using~\eqref{Eqn:CaseII:Rs_def}, and the expressions for~$T_1(\alpha, \beta)$ and~$T_2(\alpha_1 , \alpha_2, \beta_1, \beta_2)$, an analytical expression for the ER of the strong user is given by~\eqref{Eqn:CaseII:RsClosed}; this completes the proof. 
%=================================
\balance
\section{Proof of Theorem~\ref{CaseII:Theorem_Rw}} \label{CaseII:Proof_Rw}
Using the expression for $f_{g_{u'}}(x)$ (given in~Appendix~\ref{CaseII:Proof_Rs}), the PDF of $g_{\mathrm w} = \min\{g_{\mathrm n}, g_{\mathrm f}\}$ is given by $f_{g_{\mathrm w}}(x) = f_{g_{\mathrm n}}(x) [1 - F_{g_{\mathrm f}}(x)] + f_{g_{\mathrm f}}(x) [1 - F_{g_{\mathrm n}}(x)] = \sum_{l = 0}^{N_{\mathrm f} - 1} \tfrac{x^{N_{\mathrm n} + l - 1} \exp \left(-x/\Omega\right)}{\Gamma(N_{\mathrm n}) \Omega_{\mathrm n}^{N_{\mathrm n}} \Omega_{\mathrm f}^{l} l!}  +  \sum_{\ell = 0}^{N_{\mathrm n} - 1} \tfrac{x^{N_{\mathrm f} + \ell - 1} \exp \left( -x/\Omega\right)}{\Gamma(N_{\mathrm f}) \Omega_{\mathrm f}^{N_{\mathrm f}} \Omega_{\mathrm n}^{\ell} \ell!}$.
Now the PDF of $X_{\mathrm w}$ can be given by $f_{X_{\mathrm w}}(x) = \int_0^\infty y f_{g_{\mathrm w}}(yx) f_{g_{\mathrm p}}(y) \mathrm dy = \sum_{l = 0}^{N_{\mathrm f} - 1} \tfrac{\Gamma(N_{\mathrm n} + l + 1)}{\Gamma(N_{\mathrm n}) \Omega_{\mathrm n}^{N_{\mathrm n}} \Omega_{\mathrm f}^l \Omega_{\mathrm p} l!} x^{N_{\mathrm n} + l - 1} \left( \tfrac{x}{\Omega} + \tfrac{1}{\Omega_{\mathrm p}}\right)^{-(N_{\mathrm n} + l + 1)} + \sum_{\ell = 0}^{N_{\mathrm n} - 1} \tfrac{\Gamma(N_{\mathrm f} + \ell + 1)}{\Gamma(N_{\mathrm f}) \Omega_{\mathrm f}^{N_{\mathrm f}} \Omega_{\mathrm n}^{\ell} \Omega_{\mathrm p} \ell!} x^{N_{\mathrm f} + \ell - 1} \left( \tfrac{x}{\Omega} + \tfrac{1}{\Omega_{\mathrm p}}\right)^{-(N_{\mathrm f} + \ell + 1)}$.
%\begin{align}
%	& f_{X_{\min}}(x) = \int_0^\infty y f_{g_{\min}}(yx) f_{g_p}(y) \mathrm dy \notag \\
%	= & \ \sum_{k = 0}^{N_f - 1} \dfrac{\Gamma(N_n + k + 1)}{\Gamma(N_n) \Omega_n^{N_n} \Omega_f^k \Omega_p k!} x^{N_n + k - 1} \left( \dfrac{x}{\Omega} + \dfrac{1}{\Omega_p}\right)^{-(N_n + k + 1)} \notag \\
%	& + \sum_{l = 0}^{N_n - 1} \dfrac{\Gamma(N_f + l + 1)}{\Gamma(N_f) \Omega_f^{N_f} \Omega_n^l \Omega_p l!} x^{N_f + l - 1} \left( \dfrac{x}{\Omega} + \dfrac{1}{\Omega_p}\right)^{-(N_f + l + 1)}. \label{Eqn:f_Xmin}
%\end{align}
Define
$T_3(\alpha_1, \alpha_2, \beta_1, \beta_2) \triangleq \sum_{\tau = 0}^{\alpha_2 - 1} \tfrac{\Gamma(\alpha_1 + \tau + 1)}{\Gamma(\alpha_1) \beta_1^{\alpha_1} \beta_2^{\tau} \Omega_{\mathrm p} \tau!} \int_0^\infty x^{\alpha_1 + \tau - 1}$ $(1 + \tfrac{2}{K}Ix)^{-\nu} (1 + a_{\mathrm s}  \hat I x)^{\nu} \left( \tfrac{x}{\Omega} + \tfrac{1}{\Omega_{\mathrm p}}\right)^{-(\alpha_1 + \tau + 1)} \mathrm dx = \sum_{\tau = 0}^{\alpha_2 - 1} \tfrac{\Omega_{\mathrm p}^{\alpha_1 + \tau}}{\Gamma(\alpha_1) \tau! \beta_1^{\alpha_1} \beta_2^\tau }$ $\tfrac{1}{\Gamma(\nu) \Gamma(-\nu)} \int_0^\infty x^{\alpha_1 + \tau -1}  G_{1, 1}^{1, 1} \left( \tfrac{2}{K}  \hat I x \left\vert \begin{smallmatrix} 1 - \nu \\ 0\end{smallmatrix}\right. \right) G_{1, 1}^{1, 1} \left( a_{\mathrm s}  \hat I x \left\vert \begin{smallmatrix} 1 + \nu \\ 0 \end{smallmatrix}\right. \right)$ $G_{1, 1}^{1, 1} \left( \tfrac{\Omega_{\mathrm p}}{\Omega} x \left\vert \begin{smallmatrix} -(\alpha_1 + \tau) \\ 0\end{smallmatrix}\right. \right) \mathrm dx
	= \sum_{\tau = 0}^{\alpha_2 - 1} \tfrac{\Omega_{\mathrm p} (2 \hat I/K)^{-(\alpha_1 + \tau)}}{\Gamma(\alpha_1) \Gamma(\nu) \Gamma(-\nu)  \beta_1^{\alpha_1} \beta_2^{\tau} \tau!}$ $\mathcal G_{1, 1:1, 1:1, 1}^{1, 1:1, 1:1, 1} \left(\begin{smallmatrix} 1 - (\alpha_1 + \tau) \\ \nu - (\alpha_1 + \tau)\end{smallmatrix} \left\vert \begin{smallmatrix} 1 - \nu \\ 0 \end{smallmatrix} \right\vert \left. \begin{smallmatrix} -(\alpha_1 + \tau) \\ 0 \end{smallmatrix} \right\vert \tfrac{K}{2}a_{\mathrm s}, \tfrac{K\Omega_{\mathrm p}}{2\Omega  \hat I} \right)$, where the integral above is solved using~\cite[eqn.~(10)]{Reduce} and~\cite[eqn.~(9)]{Hindawi}. Using the expressions for $f_{X_{\mathrm w}}(x)$ and~$T_3(\alpha_1, \alpha_2, \beta_1, \beta_2)$, and~\eqref{Eqn:CaseII:Rw_def}, an analytical expression for the ER of the weak user is given by~\eqref{Eqn:CaseII:RwClosed}; this concludes the proof. 
%===============References========
%\vspace{-0.3cm}
\bibliographystyle{IEEEtran}
\bibliography{SSEC}
\end{document}